\def\fskip#1{}
\newtheorem{theorem}{Theorem}
\newtheorem{corollary}{Corollary}
\newtheorem{definition}{Definition}
\newtheorem{example}{Example}
\newtheorem{lemma}{Lemma}
\newtheorem{proposition}[theorem]{Proposition}
\newtheorem{remark}{Remark}
\def\1{{\bf 1}}
\newcommand{\remove}[1]{}
\begin{document}
\title{Maximizing Social Welfare Subject to Network Externalities: A Unifying Submodular Optimization Approach}
\author{\authorblockN{S. Rasoul Etesami}
 \authorblockA{Department of Industrial and Systems Engineering \& Coordinated Science Lab\\ University of Illinois Urbana-Champaign,  Urbana, IL 61801. (etesami1@illinois.edu)}
\thanks{This material is based upon work supported by the Air Force Office of Scientific Research under award number FA9550-23-1-0107 and the NSF CAREER Award under Grant No. EPCN-1944403.}
}
\maketitle
\begin{abstract}
We consider the problem of allocating multiple indivisible items to a
set of networked agents to maximize the social welfare subject to network externalities. Here, the social welfare is given by the sum of agents' utilities and externalities capture the effect that one user of an item has on the item's value to others.  We first provide a general formulation that captures some of the existing models
as a special case. We then show that the maximum social welfare
problem benefits some nice diminishing or increasing marginal return properties.
That allows us to devise polynomial-time approximation algorithms
using the Lov\'asz extension and multilinear extension of the objective functions.
Our principled approach recovers or improves some of the
existing algorithms and provides a simple and unifying framework
for maximizing social welfare subject to network externalities.
\end{abstract}
\begin{keywords}
Network resource allocation; network games; congestion games; social welfare maximization; submodular optimization.  
\end{keywords}

\section{Introduction} 
externality (also called network effect) is the effect that one user of a good or service has on the product's value to other people. Externalities exist in many network systems such as social, economic, and cyber-physical networks and can substantially affect resource allocation strategies and outcomes. In fact, due to the rapid proliferation of online social networks such as Facebook, Twitter, and LinkedIn, the magnitude of such network effects has been increased to an entirely new level \cite{cao2015pricing}. Here are just a few examples.

\emph{Allocation of Networked Goods}: Many goods have higher values when used in conjunction with others \cite{haghpanah2013optimal}. For instance, people often derive higher utility when using the same product such as cellphones (Figure \ref{Fig:motivation}). One reason is that companies often provide extra benefits for those who adopt their products. Another reason is that the users who buy the same product can share many benefits, such as installing similar Apps or sending free messages. Such products are often referred to as networked goods and are said to exhibit \emph{positive} network externalities. Since each individual wants to hold one product and has different preferences about different products, a natural objective from a managerial perspective is to assign one product to each individual to maximize social welfare subject to network externalities. In other words, we want to maximize social welfare by partitioning the individuals into different groups where the members of each group are assigned the same product.    

\emph{Cyber-Physical Network Security}: An essential task in cyber-physical security is that of providing a resource
allocation mechanism for securing the operation of a set of networked agents (e.g., servers, computers, or data centers) despite external malicious attacks \cite{grossklags2008secure}. One way of doing that is to allocate a security resource to each agent (e.g., by installing one type of antivirus software on each server). Moreover, by extending the security resources to include the ``non-secure" resource, we may assume that an agent who is not protected uses the non-secure recourse. Since the agents are interconnected, the compromise of one agent puts its neighbors at higher risk, and such a failure can cascade over the entire network. As a result, deciding what security resource (including the non-secure resource) is assigned to an agent will indirectly affect all the others. Therefore, an efficient allocation of security resources among the agents who experience network externalities is a major challenge in network security. 

\emph{Distributed Congestion Networks}: There are many instances of networked systems, such as transportation \cite{roughgarden2005selfish} or data-placement networks \cite{etesami2016pure,baev2008approximation}, in which the cost of agents increases as more agents use the same resource. For instance, in data-placement networks such as web-caches or peer-to-peer networks, an important goal is to store at each node (agent) of a network one copy of some file (resource) to minimize the sum of agents' delay costs to access all files \cite{etesami2020complexity}. As more agents store the same file, the data distribution across the network will be less balanced, hence increasing the delay cost for obtaining some files \cite{etesami2020complexity,baev2008approximation}. Similarly, in transportation networks, as more drivers (agents) use the same path (resource), the traffic congestion on that path will increase, hence increasing the travel time and energy consumption for the drivers (Figure \ref{Fig:motivation}). Therefore, a natural goal is to assign each driver to one path to minimize the overall congestion cost in the network \cite{roughgarden2005selfish,etesami2020smart}. Such network effects are often referred to as \emph{negative} externalities and have been studied under the general framework of congestion games in both centralized or game-theoretic settings \cite{etesami2020complexity,etesami2020smart,blumrosen2007welfare,rosenthal1973class,milchtaich1996congestion,roughgarden2005selfish}.   

\begin{figure*}
\vspace{0.3cm}
  \begin{center}
    \includegraphics[totalheight=.15\textheight,
width=.25\textwidth,viewport= 570 0 920 350]{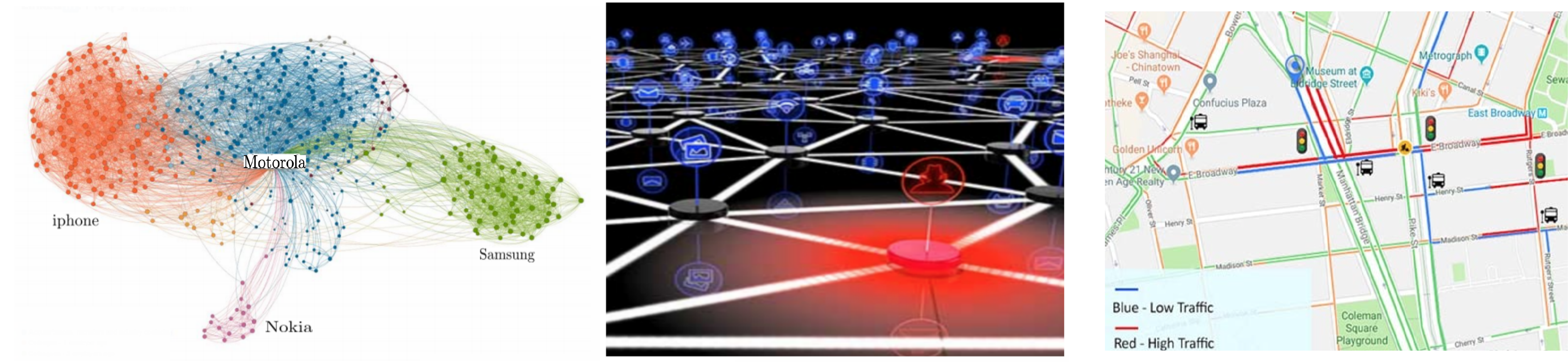}
  \end{center}
  \caption{\footnotesize{The left figure shows an instance of network goods with four different cellphone products. Individuals tend to buy a product that is adopted by most of their friends. The middle figure shows networked servers that are highly interconnected and an adversary who has compromised one of them and hence influences all others. The right figure illustrates the GPS map of a traffic network. As more drivers use the same road, they will negatively influence each others' travel time.}}\label{Fig:motivation}
\end{figure*} 

Motivated by the above, and many other similar examples, our objective in this paper is to study allocation problems when agents exhibit network externalities. While this problem has been significant in the past literature \cite{haghpanah2013optimal,candogan2012optimal,akhlaghpour2010optimal,bhattacharya2011allocations}, these results are mainly focused on allocating and pricing of copies of a \emph{single} item; other than a handful of results \cite{etesami2020complexity,de2012finding,bhalgat2012mechanisms}, the problem of maximizing the social welfare by allocating multiple items subject to network externalities has not been well-studied before. Therefore, we consider the more realistic situation with multiple competing items and when the agents in the network are demand-constrained. Moreover, we consider both positive and negative externalities with linear, convex, and concave functions. Such a comprehensive study allows us to capture more complex situations, such as traffic routing, where the change in an agent's cost depends nonlinearly (e.g., using a polynomial function \cite{roughgarden2005selfish}) on the number of agents that use the same route. 

\subsection{Related Work}

There are many papers that consider resource allocation under various network externality models. For example, negative externalities have been studied in routing \cite{roughgarden2005selfish,etesami2020smart}, facility location \cite{etesami2020complexity,etesami2017price}, welfare maximization in congestion games \cite{blumrosen2007welfare}, and monopoly pricing over a social network \cite{bhattacharya2011allocations}. On the other hand, positive externalities have been addressed in the context of mechanism design and optimal auctions \cite{bhalgat2012mechanisms,haghpanah2013optimal}, congestion games with positive externalities  \cite{de2012finding,blumrosen2007welfare}, and pricing networked goods \cite{feldman2013pricing,candogan2012optimal}. There are also some results that consider \emph{unrestricted} externalities where a mixture of both positive and negative externalities may exist in the network \cite{blumrosen2007welfare,chakrabarty2005fairness}. However, those results are often for simplified \emph{anonymous} models in which the agents do not care about the identity of others who share the same resource with them. One reason is that for unrestricted and non-anonymous externalities, maximizing the social welfare with $n$ agents is $n^{1-\epsilon}$-inapproximable for any $\epsilon>0$ \cite{de2012finding,blumrosen2007welfare}. Therefore, in this work, we consider maximizing social welfare with \emph{non-anonymous} agents but with either positive or negative externalities.  

Optimal resource allocation subject to network effects is typically NP-hard and even hard to approximate \cite{dekeijzer2014externalities,haghpanah2013optimal,etesami2020complexity,blumrosen2007welfare}. Therefore, a large body of past literature has been devoted to devising polynomial-time approximation algorithms with a good performance guarantee. Maximizing social welfare subject to network externalities can often be cast as a special case of a more general combinatorial welfare maximization problem \cite{lehmann2006combinatorial}. However, combinatorial welfare maximization with general valuation functions is hard to approximate to within a factor better than $\sqrt{n}$, where $n$ is the number of items  \cite{blumrosen2007combinatorial}. Therefore, to obtain improved approximation algorithms for the special case of welfare maximization with network externalities, one must rely on more tailored algorithms that take into account the special structure of the agents' utility functions. 

Another closely related problem on resource allocation under network effects is submodular optimization \cite{chekuri2014submodular,calinescu2007maximizing}. The reason is that utility functions of the networked agents often exhibit diminishing return property as more agents adopt the same product. That property makes a variety of submodular optimization techniques quite amenable to design improved approximation algorithms. While this connection has been studied in the past literature for the special case of a single item \cite{haghpanah2013optimal}, it has not been leveraged for the more complex case of multiple items. Unlike earlier literature \cite{de2012finding,bhalgat2012mechanisms,dekeijzer2014externalities}, our first contribution is to show that multi-item welfare maximization under network externalities can be formulated a special case of minimum submodular cost allocation (MSCA) problem \cite{chekuri2011submodular,chekuri2011approximation}. In MSCA, we are given a finite ground set $V$ and $k$ nonnegative submodular set functions $f_i, i=1,\ldots,k$, and the goal is to partition $V$ into $k$ (possibly empty) sets $S_1,\ldots,S_k$ such that the sum $\sum_{i=1}^kf_i(S_i)$ is minimized. The authors in \cite[Theorem 2]{chekuri2011submodular} used Lova\'sz extension and the Kleinberg-Tardos (KT) rounding scheme of \cite{kleinberg2002approximation} to develop an $O(\log(|V|))$-approximation algorithm for MSCA with monotone submodular cost functions. In general, MSCA is inapproximable within any multiplicative factor even in very restricted settings \cite{ene2014hardness}, and for monotone submodular cost functions, the poly-logarithmic approximation factor is the best one can hope for (as nearly matching logarithmic lower bounds are known \cite{ene2014hardness}). Therefore, instead of adapting this general framework naively to our problem setting, which can only deliver a poly-logarithmic approximation factor, as our second contribution, we exploit the special structure of the multi-item welfare maximization to obtain constant factor approximation algorithms using refined analysis of the KT randomized rounding. We should mention that there could be alternative reductions between multi-item welfare maximization and special cases of MSCA, such as submodular multi-way partition \cite{chekuri2011approximation}. However, we believe that our concise reduction is very natural and requires solving a small-size concave program, which can potentially be applied to more general problems in the above category.

A further generalization of MSCA has been studied in the past literature under the framework of multi-agent submodular optimization (MASO) \cite{santiago2018multi}, in which given submodular cost functions $f_i, i=1,\ldots,k$, the goal is to solve $\min \sum_{i=1}^kf_i(S_i)$, subject to the constraint that the disjoint union of $S_i, i=1,\ldots,k$, must belong to a given family $\mathcal{F}$ of feasible sets. When $\mathcal{F}=\{V\}$, where $V$ is the ground set, MASO reduces to the MSCA, and thus all the inapproximability results for MSCA also apply to MASO. Finally, an extension of MASO has been studied under multivariate submodular optimization (MVSO) \cite{santiago2019multivariate}, in which the objective function has a more general form of $f(S_1,\ldots,S_k)$, where $f$ captures some notion of submodularity across its arguments. Instead of using these general frameworks naively as a black-box, we will leverage the special structure of the agents' utility functions and new ideas from submodular optimization to devise improved approximation algorithms for maximizing the social welfare subject to network externalities.

\subsection{Contributions and Organization}

We first provide a general model for the maximum social welfare problem with multiple items subject to network externalities and show that the proposed model subsumes some of the existing ones as a special case. We then show that the proposed model can be formulated as a special case of multi-agent submodular optimization. Leveraging this connection and the special structure of agents' utility functions, we devise unified approximation algorithms for the multi-item maximum social welfare problem using continuous extensions of the objective functions and refined analysis of various rounding techniques such as KT randomized rounding and fair contention resolution scheme. While some of such rounding algorithms were developed for applications such as metric labeling, our work is the first to show that variants of these techniques can be used effectively to analyze the multi-item maximum social welfare problem subject to network externalities. Our principled approach not only recovers or improves the state-of-the-art approximation guarantees but also can be used for devising approximation algorithms with potentially more complex constraints.

The paper is organized as follows. In Section \ref{sec:formulation}, we formally introduce the multi-item maximum social welfare problem subject to network externalities. In Section \ref{sec:prelim}, we provide some preliminary results from submodular optimization for later use. In Section \ref{sec:nonpositive}, we consider the problem of maximum social welfare under negative concave externalities and provide a constant-factor approximation algorithm for that problem. In Section \ref{sec:convex}, we consider positive polynomial externalities as well as more general positive convex externality functions and devise improved approximation algorithms in terms of the degree of the polynomials and the curvature of the externality functions. Finally, we extend our results to devise approximation algorithms for positive concave externality functions in Section \ref{sec:concave}. We conclude the paper by identifying some future research directions in Section \ref{sec:conclusion}.  

\subsection{Notations}
We adopt the following notations throughout the paper: For a positive integer $n\in\mathbb{Z}^+$ we set $[n]:=\{1,2,\ldots,n\}$. We use bold symbols for vectors and matrices. For a matrix $\boldsymbol{x}=(x_{ji})$ we use $\boldsymbol{x}_j$ to refer to its $j$th row and $\boldsymbol{x}_i$ to refer to its $i$th column. Given a vector $\boldsymbol{v}$ we denote its transpose by $\boldsymbol{v}'$. We let $\boldsymbol{1}$ and $\boldsymbol{0}$ be column vectors of all ones and all zeros, respectively.
 
\section{Problem Formulation}\label{sec:formulation} 

Consider a set $[n]=\{1,\ldots,n\}$ of agents and a set $[m]=\{1,\ldots,m\}$ of distinct indivisible items (resources). There are unlimited copies of each item $i\in [m]$; however, each agent can receive at most one item. For any ordered pair of agents $(j,k)$ and any item $i$, there is a weight $a^i_{jk}\in \mathbb{R}$ indicating the amount by which the utility of agent $j$ gets influenced from agent $k$, given that both agents $j$ and $k$ receive the same item $i$. In particular, for $j=k$, the parameter $a^i_{jj}\ge 0$ captures intrinsic valuation of item $i$ by agent $j$. If $a^i_{jk}\ge 0, \forall i,j,k$ with $j\neq k$, we say that the agents experience \emph{positive} externalities. Otherwise, if $a^i_{jk}\le 0, \forall i,j,k$ with $j\neq k$, the agents experience \emph{negative} externalities. We refer to Figure \ref{Fig:min:realization} for an illustration of network externality weights.\footnote{The fact that intrinsic valuations $a^i_{jj}$ are nonnegative implies that agents always derive positive utilities by receiving an item. However, depending on positive or negative externalities, an agent's utility increases or decreases as more agents receive the same item.}

\begin{figure}
\vspace{0.3cm}
  \begin{center}
    \includegraphics[totalheight=.16\textheight,
width=.25\textwidth,viewport=100 0 700 600]{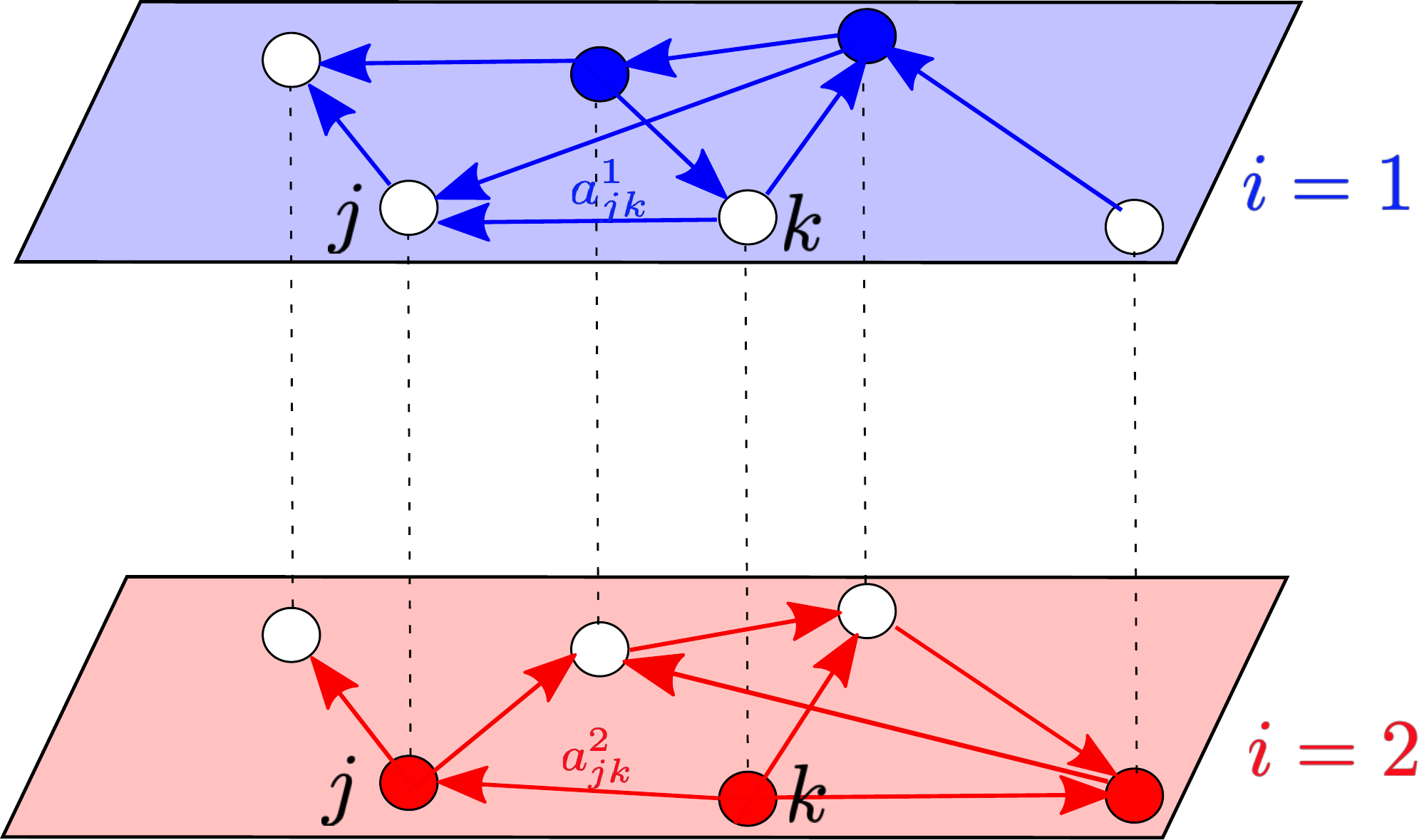}
  \end{center}
  \caption{\footnotesize{An instance of the MSW with $n=6$ agents and $m=2$ items: blue item $(i=1)$ and red item $(i=2)$. Each layer represents the directed influence graph between the agents for that specific item. The influence weights are captured by $a^1_{jk}, a^2_{jk}, \forall j,k$. If there is no edge between two agents $j$ and $k$ in an item layer $i$, it means that $a^i_{jk}=0$. Note that each agent can adopt at most one item. In the above figure, each of agents $j$ and $k$ is allocated a red item.}}\label{Fig:min:realization}
\end{figure}

Let $S_i$ denote the set of agents that receive item $i$ in a given allocation. For any $j\in S_i$, the utility that agent $j$ derives from such an allocation is given by $f_{ij}\big(\sum_{k\in S_i}a^i_{jk}\big)$, where $f_{ij}:\mathbb{R}\to \mathbb{R}$ with $f_{ij}(0)=0$ is a nondecreasing function. Depending on whether the functions $f_{ij}$ are linear, convex, or concave, we will refer to them as \emph{linear externalities}, \emph{convex externalities}, or \emph{concave externalities}. In the maximum social welfare (MSW) problem, the goal is to assign at most one item to each agent in order to maximize the social welfare. In other words, we want to find disjoint subsets $S_1,\ldots,S_m$ of agents such that $\cup_{i=1}^m S_i\subseteq [n],\ S_i\cap S_{i'}=\emptyset\ \forall i\neq i'$ to maximize the sum of agents' utilities given by
\begin{align}\label{eq:general-welfare-convex}
\max_{S_1,\ldots,S_m} \sum_{i=1}^{m}\sum_{j\in S_i} f_{ij}\big(\sum_{k\in S_i}a^i_{jk}\big).
\end{align}
We note that $[n]\setminus\cup_{i=1}^m S_i$ is the set of agents that do not receive any item. Such agents are assumed to derive zero utility and hence contribute zero to the objective function \eqref{eq:general-welfare-convex}. Now let us define binary variables $x_{ji}\in \{0,1\}$, where $x_{ji}=1$ if and only if item $i$ is assigned to agent $j$. Using the fact that $f_{ij}(0)=0\ \forall i,j$, the MSW \eqref{eq:general-welfare-convex} can be formulated as the following integer program (IP):
\begin{align}\label{eq:IP-convex}
&\max \sum_{i,j}f_{ij}\big(\sum_{k=1}^na^i_{jk}x_{ji}x_{ki}\big)\cr 
&\qquad \sum_{i=1}^m x_{ji}\leq 1\ \forall j\in [n],\cr
&\qquad  \ x_{ji}\in\{0,1\}\ \forall i\in [m], j\in [n].
\end{align}
In particular, the IP \eqref{eq:IP-convex} can be written in a compact form as
\begin{align}\label{eq:SM}   
 \max \Big\{\sum_{i=1}^{m}f_i(\boldsymbol{x}_i): \ \sum_{i=1}^m\boldsymbol{x}_i\leq \boldsymbol{1}, \boldsymbol{x}_i\in \{0,1\}^n, \forall i\Big\},
\end{align}
where for any $i\in [m]$, we define $\boldsymbol{x}_i$ to be the binary column vector $\boldsymbol{x}_i=(x_{1i},\ldots,x_{ni})'$, and $f_i:\{0,1\}^n\to \mathbb{R}$ is given by 
\begin{align}\nonumber
f_i(\boldsymbol{x}_i)=\sum_{j}f_{ij}\big(\sum_{k=1}^na^i_{jk}x_{ji}x_{ki}\big).
\end{align}
We note that the objective function in IP \eqref{eq:SM} is separable across variables $\boldsymbol{x}_i, i\in [m]$.

\begin{example}\label{ex:linear}
For the special case of linear functions $f_{ij}(y)=y\ \forall i,j$, the objective function in \eqref{eq:general-welfare-convex} becomes $\sum_{i}\sum_{(j,k)\in S_i}a^i_{jk}$, hence recovering the optimization problem studied in \cite{de2012finding}. We refer to such externality functions as linear externalities. 
\end{example}

\begin{example}\label{ex:convex}
Let $\mathcal{G}=([n], \mathcal{E})$ be a fixed directed graph among the agents, and denote the set of in neighbors of agent $j$ by $N_j$. In a special case when each agent treats all of its in neighbors equally regardless of what item they use (i.e.,  for every item $i$ we have $a^i_{jk}=1$ if $k\in N_j$ and $a^i_{jk}=0$ otherwise), the objective function in \eqref{eq:general-welfare-convex} becomes $\sum_{i}\sum_{j\in S_i} f_{ij}\big(|S_i\cap N_j|\big)$. Therefore, we recover the maximum social welfare problem studied in \cite{bhalgat2012mechanisms}. For this special case, it was shown in \cite[Theorem 3.9]{bhalgat2012mechanisms} that when the externality functions are convex and bounded above by a polynomial of degree $d$, one can find an $2^{O(d)}$-approximation for the optimum social welfare allocation. In this work, we will improve this result for the more general setting of \eqref{eq:general-welfare-convex}.  
\end{example}

\section{Preliminary Results}\label{sec:prelim}

This section provides some definitions and preliminary results, which will be used later to establish our main results. We start with the following definition.\begin{definition}
Given a finite ground set $N$, a set function $f:2^N\to \mathbb{R}$ is called \emph{submodular} if and only if $f(A)+f(B)\ge f(A\cup B)+f(A\cap B)$ for any $A, B\subseteq N$. Equivalently, $f$ is submodular if for any two nested subsets $A\subseteq B$ and any $i\notin B$, we have $f(A\cup\{i\})-f(A)\ge f(B\cup\{i\})-f(B)$. A set function $f:2^N\to \mathbb{R}$ is called \emph{supermodular} if $-f$ is submodular. A set function $f:2^N\to \mathbb{R}$ is called \emph{monotone} if $f(A)\leq f(B)$ for $A\subseteq B$.   
\end{definition} 

\subsection{Lov\'asz Extension}
Let $N$ be a ground set of cardinality $n$. Each real-valued set function on $N$ corresponds to a function $f:\{0,1\}^n\to \mathbb{R}$ over the vertices of hypercube $\{0,1\}^n$, where each subset is represented by its binary characteristic vector. Therefore, by abuse of notation, we use $f(S)$ and $f(\chi_S)$ interchangeably where $\chi_S\in \{0, 1\}^n$ is the characteristic vector of the set $S\subseteq N$. Given a set function $f:\{0,1\}^n\to \mathbb{R}$, the \emph{Lov\'asz extension} of $f$ to the continuous unit cube $[0,1]^n$, denoted by $f^L:[0,1]^n\to \mathbb{R}$, is defined by 
\begin{align}\label{eq:Lovasz-ext}
f^L(\boldsymbol{x}):=\mathbb{E}_{\theta}[f(\boldsymbol{x}^{\theta})]=\int_{0}^{1}f(\boldsymbol{x}^{\theta})d\theta,
\end{align}
where $\theta\in [0, 1]$ is a uniform random variable, and $\boldsymbol{x}^{\theta}$ for a given vector $\boldsymbol{x}\in [0, 1]^n$ is defined as: $x_i^{\theta}=1$ if $x_i\ge \theta$, and $x_i^{\theta}=0$, otherwise. In other words, $\boldsymbol{x}^{\theta}$ is a random binary vector obtained by rounding all the coordinates of $\boldsymbol{x}$ that are above $\theta$ to $1$, and the remaining ones to $0$. In particular, $f^L(\boldsymbol{x})$ is equal to the expected value of $f$ at the rounded solution $\boldsymbol{x}^{\theta}$, where the expectation is with respect to the randomness posed by $\theta$. It is known that the Lov\'asz extension $f^L$ is a convex function of $\boldsymbol{x}$ if and only if the corresponding set function $f$ is submodular \cite{Lovasz1983submodular}. This property makes the Lov\'asz extension a suitable continuous extension for submodular minimization.

\subsection{Multilinear Extension}
As mentioned earlier, the Lov\'asz extension provides a convex continuous extension of a submodular function, which is not very useful for maximizing a submodular function. For the maximization problem, one can instead consider another continuous extension known as \emph{multilinear extension}. The multilinear extension of a set function $f:2^N\to \mathbb{R}$ at a given vector $\boldsymbol{x}\in [0, 1]^n$, denoted by $f^M(\boldsymbol{x})$, is given by the expected value of $f$ at a random set $R(\boldsymbol{x})$ that is sampled from the ground set $N$ by including each element $i$ to $R(\boldsymbol{x})$ independently with probability $x_i$, i.e., 
\begin{align}\nonumber
f^M(\boldsymbol{x})=\mathbb{E}\big[f\big(R(\boldsymbol{x})\big)\big]=\sum_{R\subseteq N}f(R)\prod_{i\in R}x_i\prod_{i\notin R}(1-x_i).
\end{align}
One can show that the Lov\'asz extension is always a lower bound for the multilinear extension, i.e., $f^L(\boldsymbol{x})\leq f^M(\boldsymbol{x}), \forall \boldsymbol{x}\in [0, 1]^n$. Moreover, at any binary vector $\boldsymbol{x}\in \{0,1\}^n$, we have $f^L(\boldsymbol{x})=f(\boldsymbol{x})=f^M(\boldsymbol{x})$. In general, the multilinear extension of a submodular function is neither convex nor concave. However, it is known that there is a polynomial-time \emph{continuous greedy} algorithm that can approximately maximize the multilinear extension of a nonnegative submodular function subject to a certain class of constraints. That result is stated in the following lemma.

\begin{lemma}\label{lemm:submodular-multilinear}\cite[Theorems I.1 \& I.2]{feldman2011unified}
For any nonnegative submodular function $f:2^N\to \mathbb{R}_+$, down-monotone solvable polytope\footnote{A polytope $\mathcal{P}\subseteq [0, 1]^n$ is solvable if linear functions can be maximized over it in polynomial time. It is down-monotone if $\boldsymbol{x}\in \mathcal{P}$ and $\boldsymbol{0}\leq \boldsymbol{y}\leq \boldsymbol{x}$ (coordinate-wise) implies that $\boldsymbol{y}\in \mathcal{P}$.} $\mathcal{P}\subseteq [0, 1]^n$, there is a polynomial-time continuous greedy algorithm that finds a point $\boldsymbol{x}^*\in \mathcal{P}$ such that $f^M(\boldsymbol{x}^*)\ge \frac{1}{e} f(OPT)$, where $OPT$ is the optimal integral solution to the maximization problem $\max_{\boldsymbol{x}\in \mathcal{P}\cap \{0, 1\}^n}f^M(\boldsymbol{x})$. If in addition, the submodular function $f$ is monotone, the approximation guarantee can be improved to $f^M(\boldsymbol{x}^*)\ge (1-\frac{1}{e}) f(OPT)$.  
\end{lemma}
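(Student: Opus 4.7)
The plan is to establish the lemma via a continuous greedy process on $f^M$. Initialize $\boldsymbol{x}(0) = \boldsymbol{0}$ and evolve a trajectory $\boldsymbol{x}(t)$ over $t \in [0,1]$ according to $\frac{d\boldsymbol{x}}{dt} = \boldsymbol{v}(t)$, where
\begin{align*}
\boldsymbol{v}(t) = \argmax_{\boldsymbol{v} \in \mathcal{P}} \langle \boldsymbol{v}, \nabla f^M(\boldsymbol{x}(t))\rangle.
\end{align*}
Since $\mathcal{P}$ is solvable, this linear maximization runs in polynomial time; since $f^M$ is multilinear, its partial derivative $\partial f^M/\partial x_i = \mathbb{E}[f(R(\boldsymbol{x}) \cup \{i\}) - f(R(\boldsymbol{x}) \setminus \{i\})]$ can be estimated to arbitrary precision by sampling $R(\boldsymbol{x})$, and time is discretized into $O(n^2)$ small steps so that the ODE can be implemented as a polynomial-time algorithm.

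For the monotone case, I would couple the trajectory with the indicator vector $\boldsymbol{x}^* = \chi_{\textup{OPT}}$. Using monotonicity of $f$ together with concavity of $f^M$ along nonnegative directions, one shows
\begin{align*}
\max_{\boldsymbol{v} \in \mathcal{P}} \langle \boldsymbol{v}, \nabla f^M(\boldsymbol{x}(t))\rangle \ge \langle \boldsymbol{x}^*, \nabla f^M(\boldsymbol{x}(t))\rangle \ge f^M(\boldsymbol{x}^* \vee \boldsymbol{x}(t)) - f^M(\boldsymbol{x}(t)) \ge f(\textup{OPT}) - f^M(\boldsymbol{x}(t)),
\end{align*}
which yields the differential inequality $\frac{d}{dt} f^M(\boldsymbol{x}(t)) \ge f(\textup{OPT}) - f^M(\boldsymbol{x}(t))$. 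Solving this with $f^M(\boldsymbol{x}(0)) = 0$ gives $f^M(\boldsymbol{x}(1)) \ge (1 - 1/e) f(\textup{OPT})$, and feasibility $\boldsymbol{x}(1) = \int_0^1 \boldsymbol{v}(t)\, dt \in \mathcal{P}$ follows from convexity of $\mathcal{P}$.

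For the non-monotone case monotonicity fails, so I would use the \emph{measured} continuous greedy variant: replace the update rule by $\frac{d x_i}{dt} = v_i(t)(1 - x_i(t))$, which enforces the coordinate cap $x_i(t) \le 1 - e^{-t}$ and, because $\mathcal{P}$ is down-monotone, keeps $\boldsymbol{x}(t) \in (1 - e^{-t})\cdot\mathcal{P} \subseteq \mathcal{P}$. A submodularity-based argument then shows $f^M(\boldsymbol{x}(t) \vee \boldsymbol{x}^*) \ge e^{-t} f(\textup{OPT})$, so the differential inequality becomes $\frac{d}{dt} f^M(\boldsymbol{x}(t)) \ge e^{-t} f(\textup{OPT}) - f^M(\boldsymbol{x}(t))$, whose solution at $t=1$ gives $f^M(\boldsymbol{x}(1)) \ge \frac{1}{e} f(\textup{OPT})$.

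The main obstacle is the non-monotone analysis: establishing the key inequality $f^M(\boldsymbol{x}(t) \vee \boldsymbol{x}^*) \ge e^{-t} f(\textup{OPT})$ requires delicate use of submodularity coupled with the coordinate cap produced by the measured update rule, since without monotonicity the value of $f^M$ can drop as coordinates grow. Beyond this, one must rigorously control the additive errors introduced by discretizing the ODE and by sampling-based gradient estimation, showing that these errors shrink polynomially in $n$ and can therefore be absorbed into the stated approximation factors.
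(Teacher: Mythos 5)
The paper does not prove this lemma at all; it is imported verbatim by citation from Feldman, Naor, and Schwartz (FOCS 2011), so there is no in-paper proof to compare against. Your sketch is essentially the argument of that cited reference --- the standard continuous greedy analysis with the coupling $\langle \chi_{\mathrm{OPT}}, \nabla f^M\rangle \ge f(\mathrm{OPT}) - f^M(\boldsymbol{x}(t))$ for the monotone case, and the measured variant with the cap $x_i(t)\le 1-e^{-t}$ and the key bound $f^M(\boldsymbol{x}(t)\vee\chi_{\mathrm{OPT}})\ge e^{-t}f(\mathrm{OPT})$ for the non-monotone case --- and it is correct in outline, with the remaining work (proving that key bound and controlling discretization and sampling error) correctly identified as the technical core.
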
 

According to Lemma \ref{lemm:submodular-multilinear}, the multilinear extension provides a suitable relaxation for devising an approximation algorithm for submodular maximization. The reason is that one can first approximately solve the multilinear extension in polynomial time and then round the solution to obtain an approximate integral feasible solution. 

\subsection{Fair Contention Resolution}
Here, we provide some background on a general randomized rounding scheme known as \emph{fair contention resolution} that allows one to round a fractional solution to an integral one while preserving specific properties. Intuitively, given a fractional solution to a resource allocation problem, one ideally wants to round the solution to an integral allocation such that each item is allocated to only one agent. However, a natural randomized rounding often does not achieve that property as multiple agents may receive the same item. To resolve that issue, one can use a ``contention resolution scheme," which determines which agent should receive the item while losing at most a constant factor in the objective value. 

More precisely, suppose $n$ agents compete for an item independently with probabilities $p_1, p_2, \ldots, p_n$. Denote by $A$ the random set of agents who request the item in the first phase, i.e., $\mathbb{P}(i\in A)=p_i$ independently for each $i$. In the second phase, If $|A|\leq 1$, we do not make any change to the allocation. Otherwise, allocate the item to each agent $i\in A$ who requested the item in the first phase with probability
\begin{align}\nonumber
r_{iA}=\frac{1}{\sum_{j=1}^n p_j}\Big(\sum_{k\in A\setminus \{i\}}\frac{p_k}{|A|-1}+\sum_{k\notin A}\frac{p_k}{|A|}\Big).
\end{align}
Note that for any $A\neq \emptyset$, we have $\sum_{i\in A}r_{iA}=1$, so that after the second phase, the item is allocated to exactly one agent with probability $1$. The importance of such a fair contention resolution scheme is that if the item was requested in the first phase by an agent, then after the second phase, that agent still receives the item with probability at least $1-\frac{1}{e}$. More precisely, it can be shown that \cite{feige2010submodular}:  
\begin{lemma}\cite[Lemma 1.5]{feige2010submodular}\label{lemm:contention}
Conditioned on agent $k$ requesting the item in the first phase, she obtains it after the second phase with probability exactly 
\begin{align}\nonumber
\frac{1-\prod_{j=1}^n (1-p_j)}{\sum_{j=1}^np_j}\ge 1-\frac{1}{e}.
\end{align}
\end{lemma}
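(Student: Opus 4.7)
\vspace{0.2cm}\noindent\textbf{Proof plan.} The plan is to show directly that $\Pr(k\text{ wins})=\tfrac{p_k}{P}\bigl(1-\prod_i(1-p_i)\bigr)$ with $P=\sum_i p_i$; the stated conditional probability then follows by dividing by $\Pr(k\in A)=p_k$, and the $1-1/e$ bound follows from standard exponential estimates.

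First I would rewrite the second-phase rule in a symmetric form. Introduce an auxiliary random variable $J\in[n]$ with $\Pr(J=j)=p_j/P$, drawn independently of $A$, and set $A^{+}:=A\cup\{J\}$. A short algebraic check shows that $r_{iA}$ coincides with the probability that the winner equals $i$ under the equivalent process: given $(A,J)$ with $A\neq\emptyset$, declare the winner to be a uniformly random element of $A^{+}\setminus\{J\}$, with the convention that when this set is empty (so $A=\{J\}$) the winner is the single element of $A$.

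Next I would compute the joint law of $(A^{+},J)$. A direct calculation yields
\[\Pr(A^{+}=B,\,J=j)=\frac{1}{P}\prod_{i\in B}p_i\prod_{i\notin B}(1-p_i)\]
for every $j\in B$, a quantity that does not depend on $j$. Hence, conditionally on $A^{+}=B$, the variable $J$ is uniform on $B$, which implies $\Pr(\text{winner}=k\mid A^{+}=B)=1/|B|$ whenever $|B|\geq 2$ and $k\in B$. Summing $\Pr(A^{+}=B)/|B|$ over $B\ni k$ with $|B|\geq 2$ telescopes to $\tfrac{1}{P}\bigl[\Pr(k\in A)-\Pr(A=\{k\})\bigr]=\tfrac{p_k}{P}\bigl(1-\prod_{i\neq k}(1-p_i)\bigr)$. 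Adding the edge contribution from $B=\{k\}$, which requires $A=\{k\}$ and amounts to $\tfrac{p_k^{2}}{P}\prod_{i\neq k}(1-p_i)$, and combining the two terms yields $\tfrac{p_k}{P}\bigl(1-\prod_i(1-p_i)\bigr)$ as desired.

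Finally, dividing by $p_k$ gives the claimed identity. For the inequality, the bound $1-p_i\leq e^{-p_i}$ gives $1-\prod_i(1-p_i)\geq 1-e^{-P}$, and since $t\mapsto(1-e^{-t})/t$ is decreasing on $(0,\infty)$ and $P\leq 1$ in the ambient contention-resolution regime, the right-hand side is at least $1-1/e$. The main obstacle I anticipate is the reformulation step: the formula for $r_{iA}$ is opaque as written, and spotting the $(A^{+},J)$ representation is the conceptual crux; once that is in hand, the striking symmetry $\Pr(A^{+}=B,J=j)$ being independent of $j\in B$ makes the remaining computation essentially mechanical.
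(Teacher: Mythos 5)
The paper does not prove this lemma at all: it is imported verbatim from Feige and Vondr\'ak \cite[Lemma 1.5]{feige2010submodular}, so there is no in-paper argument to compare against. Your blind proof is a correct, self-contained derivation, and the reformulation you propose is sound: drawing $J$ with $\Pr(J=j)=p_j/P$ independently of $A$ and declaring the winner to be uniform on $A^{+}\setminus\{J\}=A\setminus\{J\}$ does reproduce $r_{iA}$ exactly (the three cases $J\in A\setminus\{i\}$, $J=i$, $J\notin A$ give the two sums and the zero term), the joint law $\Pr(A^{+}=B,J=j)=\frac{1}{P}\prod_{i\in B}p_i\prod_{i\notin B}(1-p_i)$ is indeed constant in $j\in B$, and the resulting $\Pr(\text{winner}=k\mid A^{+}=B)=1/|B|$ together with the $B=\{k\}$ edge term assembles to $\frac{p_k}{P}\bigl(1-\prod_i(1-p_i)\bigr)$ as claimed. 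Two minor points worth tightening if you write this up: first, the final inequality genuinely requires $\sum_i p_i\le 1$ (here it holds with equality since the rows of the fractional solution satisfy $\sum_i x_{ji}=1$), and this hypothesis is implicit rather than stated in the lemma, so you should make it explicit; second, in your edge-case convention the set $A^{+}\setminus\{J\}$ can also be empty because $A=\emptyset$, not only because $A=\{J\}$, which is harmless since you condition on $A\ne\emptyset$ but should be said. Compared with the direct computation in the cited source, your symmetrization argument trades an explicit summation over subsets for a cleaner probabilistic identity, and it makes transparent why the conditional success probability is the same for every agent $k$ --- which is precisely the ``fairness'' property the scheme is named for.
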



\section{MSW with Negative Concave Externalities}\label{sec:nonpositive}

In this section, we consider maximizing the social welfare with negative concave externalities and provide a constant factor approximation algorithm by reducing that problem to submodular maximization subject to a matroid constraint.

\begin{lemma}\label{lemm:convex-sub}
Given nondecreasing concave externality functions $f_{ij}:\mathbb{R}\to \mathbb{R}$, intrinsic valuations $a^i_{jj}\ge 0, \forall i,j$, and negative externality weights $a^i_{jk}\leq 0, \forall i,j\neq k$, the objective function in \eqref{eq:general-welfare-convex} is a submodular set function.   
\end{lemma}
\begin{proof}
Let $f_i(S_i)=\sum_{j\in S_i} f_{ij}\big(\sum_{k\in S_i}a^i_{jk}\big)$, and note that the objective function in \eqref{eq:general-welfare-convex} can be written in a separable form as $\sum_{i=1}^{m}f_i(S_i)$. Thus, it is enough to show that each $f_i$ is a submodular set function. For any  $A\subseteq B, \ell\notin B$, we can write
\begin{align}\label{eq:sub-mod}
&f_i(B\cup\{\ell\})-f_i(B)\cr 
&=\sum_{j\in B\cup\{\ell\}} f_{ij}\big(\!\!\!\sum_{k\in B\cup\{\ell\}}\!\!\!a^i_{jk}\big)-\sum_{j\in B} f_{ij}\big(\sum_{k\in B}a^i_{jk}\big)\cr 
&=\sum_{j\in B} \Big(f_{ij}\big(\!\!\!\sum_{k\in B\cup\{\ell\}}\!\!\!a^i_{jk}\big)-f_{ij}\big(\sum_{k\in B}a^i_{jk}\big)\Big)+f_{i\ell}\big(\!\!\!\sum_{k\in B\cup\{\ell\}}\!\!\!a^i_{\ell k}\big)\cr 
&\leq \sum_{j\in A} \Big(f_{ij}\big(\!\!\!\sum_{k\in B\cup\{\ell\}}\!\!\!a^i_{jk}\big)-f_{ij}\big(\sum_{k\in B}a^i_{jk}\big)\Big)+f_{i\ell}\big(\!\!\!\sum_{k\in A\cup\{\ell\}}\!\!\!a^i_{\ell k}\big)\cr
&\leq \sum_{j\in A} \Big(f_{ij}\big(\!\!\!\sum_{k\in A\cup\{\ell\}}\!\!\!a^i_{jk}\big)-f_{ij}\big(\sum_{k\in A}a^i_{jk}\big)\Big)+f_{i\ell}\big(\!\!\!\sum_{k\in A\cup\{\ell\}}\!\!\!a^i_{\ell k}\big)\cr
&=f_i(A\cup\{\ell\})-f_i(A).
\end{align} 
The first inequality holds by the monotonicity of functions $f_{ij}$ and by $A\subseteq B, \ell\notin B$ (note that since $a^i_{jk}\leq 0, j\neq k$, each of the summands in the first summation is nonpositive). The second inequality in \eqref{eq:sub-mod} follows from concavity of the functions $f_{ij}$. More precisely, given any $j\in A$, let $\sum_{k\in B\setminus A}a^i_{jk}=d$, $\sum_{k\in A\cup\{\ell\}}a^i_{jk}=p$, and $\sum_{k\in A}a^i_{jk}=q$, where we note that $d\leq 0$ and $p\leq q$. By concavity of $f_{ij}$ we have $f_{ij}(q)-f_{ij}(q+d)\leq f_{ij}(p)-f_{ij}(p+d)$, or equivalently $f_{ij}(p+d)-f_{ij}(q+d)\leq f_{ij}(p)-f_{ij}(q)$, which is exactly the second inequality in \eqref{eq:sub-mod}.  
\end{proof}

Let us now consider the MSW with negative concave externalities. However, to assure that the maximization problem from the lens of approximation algorithm is well-defined, we assume that for any feasible assignment of items to the agents, the objective value in \eqref{eq:general-welfare-convex} is nonnegative. Otherwise, the maximization problem may have a negative optimal value, hence hindering the existence of an approximation algorithm. In fact, if a feasible allocation $(S_1,\ldots,S_m)$ returns a negative objective value, then by unassigning all the items, one can obtain the trivial higher objective value of $0$. Therefore, without loss of generality, we may restrict our attention to allocation profiles for which the objective value \eqref{eq:general-welfare-convex} is nonnegative.

\begin{theorem}\label{thm:multi-negative}
There is a randomized $e$-approximation algorithm for the MSW \eqref{eq:general-welfare-convex} with negative concave externalities.
\end{theorem}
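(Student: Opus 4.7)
The plan is to cast MSW with nonpositive linear externalities as maximizing a nonnegative submodular function over a partition matroid polytope, invoke the continuous greedy algorithm of Lemma \ref{lemm:submodular-multilinear} on the multilinear relaxation, and then round the fractional solution column-wise in a way that preserves the expected value exactly.

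First I would verify that the aggregate objective $f(\boldsymbol{x}) := \sum_{i=1}^m \boldsymbol{x}_i' \boldsymbol{A}_i \boldsymbol{x}_i$ is a nonnegative submodular set function on $\{0,1\}^{nm}$. For each $i$ and any $j\neq k$, the discrete second difference of $f_i$ equals $a^i_{jk}+a^i_{kj}\leq 0$, so each $f_i$ is submodular; since $f_i$ depends only on the entries of column $i$, the aggregate $f$ is a sum of submodular set functions with disjoint supports, hence submodular, while nonnegativity follows from the dominant-diagonal hypothesis (extending, if needed, to partial assignments). I then treat the partition constraint $\sum_i x_{ji}=1$ as a partition matroid base constraint and work with its down-monotone relaxation $\mathcal{P}:=\{\boldsymbol{x}\in[0,1]^{nm}:\sum_i x_{ji}\leq 1\ \forall j\}$; since the base polytope sits inside $\mathcal{P}$, the MSW optimum $f(\text{OPT})$ is upper-bounded by $\max_{\boldsymbol{x}\in\mathcal{P}\cap\{0,1\}^{nm}} f(\boldsymbol{x})$.

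Next I apply Lemma \ref{lemm:submodular-multilinear}: because $\mathcal{P}$ is down-monotone and solvable, continuous greedy returns in polynomial time a fractional $\boldsymbol{x}^*\in\mathcal{P}$ with $f^M(\boldsymbol{x}^*)\geq \tfrac{1}{e}f(\text{OPT})$. Two structural observations then close the argument. First, column-separability of $f$ carries over to the multilinear extension, giving $f^M(\boldsymbol{x}^*)=\sum_i f_i^M(\boldsymbol{x}^*_i)$. Second, a row-wise rounding that independently, for each agent $j$, picks item $i$ with probability $x^*_{ji}$, produces an integer matrix $\hat{\boldsymbol{x}}$ in which the entries within any fixed column $i$ are mutually independent Bernoullis with the correct marginals $x^*_{ji}$, because different rows are rounded independently. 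By the definition of the multilinear extension this gives $\mathbb{E}[f_i(\hat{\boldsymbol{x}}_i)]=f_i^M(\boldsymbol{x}^*_i)$, and summing yields $\mathbb{E}[f(\hat{\boldsymbol{x}})]=f^M(\boldsymbol{x}^*)\geq\tfrac{1}{e}f(\text{OPT})$, exactly the $e$-approximation claimed.

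The main obstacle I anticipate is the gap between the down-monotone relaxation---whose integer extreme points correspond to partial assignments---and the MSW partition requirement that each agent receive exactly one item, since per-row sampling can leave an agent unassigned whenever $\sum_i x^*_{ji}<1$. I would resolve this cleanly by adjoining a ``null item'' $m+1$ with all-zero influence weights $a^{m+1}_{jk}=0$: under this lift, the partition polytope of the extended instance becomes isomorphic to $\mathcal{P}$, the null item contributes nothing to $f$, and the row-wise sampling now automatically returns a valid partition into the $m+1$ items while preserving both the value of $\text{OPT}$ and the randomized guarantee derived above.
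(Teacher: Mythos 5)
Your proposal is correct and follows essentially the same route as the paper: verify that each $f_i(\boldsymbol{x}_i)=\boldsymbol{x}_i'\boldsymbol{A}_i\boldsymbol{x}_i$ is a nonnegative submodular set function, run the continuous greedy algorithm of Lemma \ref{lemm:submodular-multilinear} over the down-monotone solvable polytope $\{\boldsymbol{x}\ge \boldsymbol{0}:\sum_i\boldsymbol{x}_i\le \boldsymbol{1}\}$, and then round row-wise independently so that within each column the entries are independent Bernoullis with the correct marginals, giving $\mathbb{E}[f_i(\hat{\boldsymbol{x}}_i)]=f_i^M(\boldsymbol{x}^*_i)$ exactly and hence $\mathbb{E}[f(\hat{\boldsymbol{x}})]\ge \frac{1}{e}f(OPT)$. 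Your null-item device for restoring a genuine assignment when $\sum_i x^*_{ji}<1$ addresses a feasibility detail the paper leaves implicit, but it does not alter the substance of the argument.
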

\begin{proof}
Let us consider the IP formulation \eqref{eq:SM} for the MSW and note that by Lemma \ref{lemm:convex-sub}, the objective function $f(\boldsymbol{x})=\sum_i f_i(\boldsymbol{x}_i)$ is a nonnegative and submodular function. Here, $\boldsymbol{x}$ can be viewed as an $n\times m$ matrix whose $i$th column is given by $\boldsymbol{x}_i$. Using separability of $f(\boldsymbol{x})$, the multilinear relaxation for IP \eqref{eq:SM} is given by
\begin{align}\label{eq:approximate-multilinear}
\max\!\Big\{\!f^M(\boldsymbol{x})\!=\!\!\sum_{i=1}^m f_i^{M}(\boldsymbol{x}_i): \sum_{i=1}^m \boldsymbol{x}_i\leq \boldsymbol{1}, \boldsymbol{x}_i\ge \boldsymbol{0}, \forall i\in [m]\Big\},  
\end{align}  
where we have relaxed the binary constraints $\boldsymbol{x}_i\in\{0,1\}^n$ to $\boldsymbol{x}_i\ge \boldsymbol{0}$. The feasible set $\mathcal{P}=\{\boldsymbol{x}: \sum_i \boldsymbol{x}_i\leq \boldsymbol{1}, \boldsymbol{x}_i\ge 0\}$ is clearly a down-monotone polytope as $\boldsymbol{x}\in \mathcal{P}$ and $\boldsymbol{0}\leq \boldsymbol{y}\leq \boldsymbol{x}$, implies $\boldsymbol{y}\in \mathcal{P}$. Moreover, $\mathcal{P}$ is a solvable polytope as it contains only $m+n$ linear constraints and a total of $mn$ variables. Therefore, using Lemma \ref{lemm:submodular-multilinear}, one can find, in polynomial time, an approximate solution $\boldsymbol{x}^*$ to \eqref{eq:approximate-multilinear} such that $f^M(\boldsymbol{x}^*)\ge \frac{1}{e}f(OPT)$, where $OPT$ denotes the optimal integral solution to IP \eqref{eq:SM}. 

Next, we can round the approximate solution $\boldsymbol{x}^*$ to an integral one $\hat{\boldsymbol{x}}$ by rounding each row of $\boldsymbol{x}^*$ independently using the natural probability distribution induced by that row. More precisely, for each row $j$ (and independently of other rows), we pick entry $(j,i)$ in row $j$ with probability $x^*_{ji}$ and only round that entry to $1$ while setting the remaining entries of row $j$ to $0$. Such a rounding sets at most one entry in each row of the rounded solution to $1$ because $\sum_{i}x^*_{ji}\leq 1$. Since the rounding is done independently across the rows, for any column $i$, the probability that the $j$th entry is set to $1$ is $x^*_{ji}$, which is independent of the other entries in that column. Therefore, $\hat{\boldsymbol{x}}_i$ represents the characteristic vector of a random set $R(\boldsymbol{x}^*_i)\subseteq [n]$, where $j\in R(\boldsymbol{x}^*_i)$ independently with probability $x^*_{ji}$. Moreover, although the rounded solution $\hat{\boldsymbol{x}}$ is correlated across its columns, because the objective function $f(\hat{\boldsymbol{x}})$ is separable across columns, using linearity of expectation and regardless of the rounding scheme we have $\mathbb{E}[f(\hat{\boldsymbol{x}})]=\sum_{i=1}^m\mathbb{E}[f_i(\hat{\boldsymbol{x}}_i)]$. Thus, by definition of the multilinear extension, we have
\begin{align}\nonumber
\mathbb{E}[f(\hat{\boldsymbol{x}})]&=\sum_{i=1}^m \mathbb{E}[f_i(\hat{\boldsymbol{x}}_i)]=\sum_{i=1}^m\mathbb{E}\big[f_i\big(R(\boldsymbol{x}^*_i)\big)\big]\cr 
&=\sum_{i=1}^m f^M_i(\boldsymbol{x}^*_i)=f^M(\boldsymbol{x}^*)\ge \frac{1}{e}f(OPT). 
\end{align}
\end{proof}

\begin{remark}
The constraints in $\mathcal{P}$ define a partition matroid. Subsequently, one can replace the independent rounding scheme in Theorem \ref{thm:multi-negative} by the \emph{pipage} rounding scheme \cite[Lemma B.3]{vondrak2013symmetry} and obtain the same performance guarantee. However, due to the special structure of $\mathcal{P}$, such a complex rounding is not necessary, and one can substantially save in the running time using the proposed independent rounding.
\end{remark}


For the special case of negative weights $a^i_{jk}\leq 0, \forall j\neq k$ and linear externality functions $f_{ij}(y)=y, \forall i,j$, the objective function in IP \eqref{eq:SM} becomes $\sum_{i=1}^m\boldsymbol{x}_i'\boldsymbol{A}_i\boldsymbol{x}_i$, where $\boldsymbol{A}_i=(a^i_{jk}), i\in [m]$ are $n\times n$ weight matrices with nonnegative diagonal entries (due to intrinsic valuations) and negative off-diagonal entries. Applying Theorem \ref{thm:multi-negative} to this special case gives an $e$-approximation algorithm for the MSW with negative linear externalities, which answers a question posed in \cite[Section 7.11]{dekeijzer2014externalities}. In particular, if we further assume that the influence weight matrices $\boldsymbol{A}_i, i\in [m]$ are diagonally dominant, i.e., $\sum_{k=1}^n a^i_{jk}\ge 0, \forall i, j$, then the submodular objective function $\sum_{i=1}^m\boldsymbol{x}_i'\boldsymbol{A}_i\boldsymbol{x}_i$ will also be monotone. In that case, using the second part of Lemma \ref{lemm:submodular-multilinear} one can obtain an improved approximation factor of $1-\frac{1}{e}$.

\section{MSW with Positive Monotone Convex Externalities}\label{sec:convex}

In this section, we consider positive monotone convex externalities and develop polynomial-time approximation algorithms for the maximum social welfare problem. We first state the following lemma that is a counterpart of Lemma \ref{lemm:convex-sub} to the case of positive convex externalities.

\begin{lemma}\label{lemm:convex-supper}
For positive weights $a^i_{jk}\ge 0$ and nondecreasing convex externality functions $f_{ij}:\mathbb{R}_+\to \mathbb{R}_+$, the objective function in \eqref{eq:general-welfare-convex} is a nondecreasing and nonnegative supermodular set function.   
\end{lemma}
\begin{proof}
As in Lemma \ref{lemm:convex-sub}, if we define $f_i(S_i)=\sum_{j\in S_i} f_{ij}\big(\sum_{k\in S_i}a^i_{jk}\big)$, it is enough to show that each $f_i$ is a monotone supermodular set function. The monotonicity and nonnegativity of $f_i$ immediately follows from nonnegativity of weights $a^i_{jk}$, and monotonicity and nonnegativity of $f_{ij}, \forall j\in [n]$. To show supermodularity of $f_{i}$, for any  $A\subseteq B, \ell\notin B$ and similar to Lemma \ref{lemm:convex-sub}, we can write
\begin{align}\label{eq:supper-mod}
&f_i(B\cup\{\ell\})-f_i(B)\cr
&=\sum_{j\in B} \Big(f_{ij}\big(\!\!\!\sum_{k\in B\cup\{\ell\}}\!\!\!a^i_{jk}\big)-f_{ij}\big(\sum_{k\in B}a^i_{jk}\big)\Big)+f_{i\ell}\big(\!\!\!\sum_{k\in B\cup\{\ell\}}\!\!\!a^i_{\ell k}\big)\cr 
&\ge \sum_{j\in A} \Big(f_{ij}\big(\!\!\!\sum_{k\in B\cup\{\ell\}}\!\!\!a^i_{jk}\big)-f_{ij}\big(\sum_{k\in B}a^i_{jk}\big)\Big)+f_{i\ell}\big(\!\!\!\sum_{k\in A\cup\{\ell\}}\!\!\!a^i_{\ell k}\big)\cr
&\ge \sum_{j\in A} \Big(f_{ij}\big(\!\!\!\sum_{k\in A\cup\{\ell\}}\!\!\!a^i_{jk}\big)-f_{ij}\big(\sum_{k\in A}a^i_{jk}\big)\Big)+f_{i\ell}\big(\!\!\!\sum_{k\in A\cup\{\ell\}}\!\!\!a^i_{\ell k}\big)\cr
&=f_i(A\cup\{\ell\})-f_i(A),
\end{align} 
where the first inequality holds by the monotonicity of functions $f_{ij}$ and by $a^i_{jk}\ge 0, A\subseteq B$, and the second inequality follows from convexity of the functions $f_{ij}$. 
\end{proof}

Next, let us consider the IP formulation \eqref{eq:SM} for the MSW, where $f_i:\{0,1\}^n\to \mathbb{R}_+$ is given by $f_i(\boldsymbol{x}_i)=\sum_jf_{ij}\big(\sum_{k}a^i_{kj}x_{ki}x_{ji}\big)$. As each $f_{ij}$ is a convex and nondecreasing function, using Lemma \ref{lemm:convex-supper}, each $f_i$ is a monotone nonnegative supermodular function. Now let $f^L_i(\boldsymbol{x}_i):[0,1]^n\to \mathbb{R}_+$ be the Lov\'asz extension of $f_i(\boldsymbol{x}_i)$ given by 
\begin{align}\label{eq:theta-polynomial-f}
f^L_i(\boldsymbol{x}_i)\!=\!\mathbb{E}_{\theta}[f_i(\boldsymbol{x}^{\theta}_i)]\!=\!\mathbb{E}_{\theta}\Big[\sum_jf_{ij}\big(\sum_{k}a^i_{kj}x^{\theta}_{ki}x^{\theta}_{ji}\big)\Big].
\end{align} 
Since $f_i(\boldsymbol{x}_i)$ is supermodular, the function $f_i^{L}(\boldsymbol{x}_i)$ is a nonnegative concave function. As the objective function in \eqref{eq:SM} is separable across variables $\boldsymbol{x}_i, i\in [m]$, by linearity of expectation $\sum_{i=1}^m f^L_i(\boldsymbol{x}_i)$ equals to the Lov\'asz extension of the objective function in \eqref{eq:SM}, which is also a concave function. Therefore, we obtain the following concave relaxation for the IP \eqref{eq:SM} whose optimal value upper-bounds that of \eqref{eq:SM}. 
 \begin{align}\label{eq:convex-Rel}
 \max \Big\{\sum_{i=1}^m f^L_i(\boldsymbol{x}_i): \ \sum_{i=1}^m\boldsymbol{x}_i\leq \boldsymbol{1}, \ \boldsymbol{x}_i\ge \boldsymbol{0}, \forall i\in [m]\Big\}.
 \end{align}
 
\subsection{Positive Polynomial Externalities of Bounded Degree}

Here, we consider convex externality functions that can be represented by polynomials of the form $f_{ij}(y)=\sum_{r\in [d]}c_{r-1}y^{r-1}$ with nonnegative coefficients $c_{r-1}\ge 0, \forall r\in [d]$. In particular, we show that a slight variant of the randomized rounding algorithm derived from the work of Kleinberg and Tardos (KT) for metric labeling \cite{kleinberg2002approximation} provides a $d$-approximation for the IP \eqref{eq:SM} when applied to the optimal solution of the concave program \eqref{eq:convex-Rel}. The rounding scheme is summarized in Algorithm \ref{alg-main}. The algorithm proceeds in several rounds until all the agents are assigned an item. At each round, the algorithm selects a random item $I\in [m]$ and a random subset of unassigned agents $S_I^{\theta}\subseteq [n]\setminus S$, and assign item $I$ to the agents in the set $S_I^{\theta}$.

\begin{algorithm}[h]\caption{Iterative KT Rounding Algorithm}\label{alg-main}
$\bullet$ Let $\boldsymbol{x}$ be the optimal solution to the concave program \eqref{eq:convex-Rel}.

\noindent 
$\bullet$ During the course of the algorithm, let $S$ be the set of allocated agents and $S_i$ be the set of agents that are allocated item $i$. Initially set $S=\emptyset$ and $S_i=\emptyset, \forall i$.

\noindent
$\bullet$ While $S\neq [n]$, pick $i\in [m], \theta \in [0, 1]$ uniformly at random. Let $S_i^{\theta}:=\{j\in [n]\setminus S: x_{ji}\ge \theta\}$, and update $S_i\leftarrow S_i\cup S_i^{\theta}$ and $S\leftarrow S\cup S_i^{\theta}$.

\noindent
$\bullet$ Return $S_1, \ldots, S_m$. 
\end{algorithm}

In the following lemma, we show that if the externality functions $f_{ij}$ can be represented (or uniformly approximated) by nonnegative-coefficient polynomials of degree less than $d$, then the expected utility of the agents assigned during the first round of Algorithm \ref{alg-main} is at least $\frac{1}{d}$ of the expected value that those agents fractionally contribute to the Lov\'asz extension objective function.

\begin{lemma}\label{lemm:increment}
Assume each externality function $f_{ij}$ is a polynomial with nonnegative coefficients of degree less than $d$. Let $f^{L}(\boldsymbol{x})=\sum_{i=1}^{m}f_i^{L}(\boldsymbol{x}_i)$, where $\boldsymbol{x}$ is an $m\times n$ feasible solution to \eqref{eq:convex-Rel} whose $i$th column equals $\boldsymbol{x}_i$. Moreover, let $A=S_I^{\theta}$ be the random (possibly empty) set of agents that are selected during the \emph{first} round of Algorithm \ref{alg-main}. Then,
\begin{align}\nonumber
\mathbb{E}[f_I(A)]\ge \frac{1}{d}\mathbb{E}[f^L(\boldsymbol{x})-f^L(\boldsymbol{x}_{|_{\bar{A}}})],
\end{align}
where $f^L(\boldsymbol{x}_{|_{\bar{A}}})$ denotes the value of the Lov\'asz extension $f^L(\cdot)$ when its argument is restricted to the rows of $\boldsymbol{x}$ corresponding to the agents $j\in \bar{A}=[n]\setminus A$.\footnote{Equivalently, $f^L(\boldsymbol{x}_{|_{\bar{A}}})$ is equal to evaluating $f^L$ at a solution that is obtained from $\boldsymbol{x}$ by setting all the rows corresponding to agents in $A$ to $\boldsymbol{0}$.}
\end{lemma}
\begin{proof}
First, we note that
\begin{align}\label{eq:f-I-A}
\mathbb{E}[f_I(A)]&=\frac{1}{m}\sum_{i=1}^{m}\mathbb{E}_{\theta}[f_{i}(S_i^{\theta})]\cr 
&=\frac{1}{m}\sum_{i=1}^{m}\mathbb{E}_{\theta}[f_{i}(\boldsymbol{x}_i^{\theta})]\cr 
&=\frac{1}{m}f^L(\boldsymbol{x}),
\end{align}
where we recall that $x^{\theta}_{ji}=1$ if and only if $j\in S_i^{\theta}$. Since each $x_{ji}^{\theta}$ is a binary random variable, we have $(x_{ji}^{\theta})^t=x_{ji}^{\theta}, \forall t\ge 1$. As each $f_{ij}$ is a polynomial with nonnegative coefficients of degree less than $d$, after expanding all the terms in \eqref{eq:theta-polynomial-f}, there are nonnegative coefficients $b^{i}_{j_1,\ldots,j_r}$ such that
\begin{align}\nonumber
f_i(\boldsymbol{x}^{\theta}_i)&=\sum_jf_{ij}\big(\!\sum_{k}a^i_{jk}x^{\theta}_{ki}x^{\theta}_{ji}\big)\cr
&=\sum_{r=2}^d\sum_{j_1,\ldots,j_{r}}\!b^{i}_{j_1,\ldots,j_r} \!\prod_{\ell=1}^rx^{\theta}_{j_{\ell}i}.
\end{align}   
Note that we may assume $f_{ij}$ does not have any constant term as it does not affect the MSW optimization. Taking expectation from the above relation, we obtain
\begin{align}\nonumber
f^L_i(\boldsymbol{x}_i)&=\sum_{r=2}^d\sum_{j_1,\ldots,j_{r}}b^{i}_{j_1,\ldots,j_r} \mathbb{E}_{\theta}\big[\prod_{\ell=1}^rx^{\theta}_{j_{\ell}i}\big]\cr 
&=\sum_{r=2}^d\sum_{j_1,\ldots,j_{r}}b^{i}_{j_1,\ldots,j_r} \mathbb{P}_{\theta}\big(x^{\theta}_{j_{\ell}i}=1, \forall \ell\in [r]\big)\cr 
&=\sum_{r=2}^d\sum_{j_1,\ldots,j_{r}}b^{i}_{j_1,\ldots,j_r} \mathbb{P}\big(\theta\leq x_{j_{\ell}i}, \forall \ell\in [r]\big)\cr 
&=\sum_{r=2}^d\sum_{j_1,\ldots,j_{r}}b^{i}_{j_1,\ldots,j_r} \min_{\ell\in [r]}\{x_{j_{\ell}i}\}.
\end{align}
For any $\ell\in [r]$, with some abuse of notation, let $\boldsymbol{x}_{j_{\ell}}=(x_{j_{\ell}i}, i\in [m])$ denote the $j_{\ell}$-th \emph{row} of the solution $\boldsymbol{x}$, and define $f^L(\boldsymbol{x}_{j_1},\ldots,\boldsymbol{x}_{j_r})=\sum_i b^{i}_{j_1,\ldots,j_r}\min_{\ell\in [r]}\{x_{j_{\ell}i}\}$ to be the restriction of $f^{L}$ to the rows $\boldsymbol{x}_{j_{\ell}}, \ell\in [r]$. Using the above expression we have
\begin{align}\label{eq:f-r-extend}
f^L(\boldsymbol{x})=\sum_{i=1}^mf_i^L(\boldsymbol{x}_i)=\sum_{r=2}^d\sum_{j_1,\ldots,j_{r}}f^L(\boldsymbol{x}_{j_1},\ldots,\boldsymbol{x}_{j_r}).
\end{align}
Using \eqref{eq:f-r-extend}, we note that a tuple of rows $\boldsymbol{x}_{j_1},\ldots,\boldsymbol{x}_{j_{r}}$ contribute exactly $f^L(\boldsymbol{x}_{j_1},\ldots,\boldsymbol{x}_{j_r})$ to the objective $f^L(\boldsymbol{x})-f^L(\boldsymbol{x}_{|_{\bar{S}^{\theta}_i}})$ if at least one of the agents $j_{\ell}, \ell\in [r]$ belong to $S_i^{\theta}$, and contribute $0$, otherwise. Therefore, if $A=S^{\theta}_I$ is the random set obtained during the first round of Algorithm \ref{alg-main},  using linearity of expectation, we can write
\begin{align}\label{eq:max-2}
&\mathbb{E}[f^L(\boldsymbol{x})-f^L(\boldsymbol{x}_{|_{\bar{A}}})]=\frac{1}{m}\sum_{i=1}^{m}\mathbb{E}[f^L(\boldsymbol{x})-f^L(\boldsymbol{x}_{|_{\bar{S}^{\theta}_i}})]\cr 
&=\frac{1}{m}\sum_{i=1}^m\sum_{r=2}^{d}\sum_{j_1,\ldots,j_r}\mathbb{P}\big(\cup_{\ell=1}^{r}\{j_{\ell}\in S^{\theta}_i\}\big)f^{L}(\boldsymbol{x}_{j_1},\ldots,\boldsymbol{x}_{j_r})\cr 
&=\frac{1}{m}\sum_{i=1}^m\sum_{r=2}^{d}\sum_{j_1,\ldots,j_r}\max_{\ell\in [r]}\{x_{j_{\ell}i}\}f^{L}(\boldsymbol{x}_{j_1},\ldots,\boldsymbol{x}_{j_r})\cr
&=\frac{1}{m}\sum_{r=2}^{d}\sum_{j_1,\ldots,j_r}\!\!\!\Big(\!\sum_{i=1}^m\max_{\ell\in [r]}\{x_{j_{\ell}i}\}\!\Big)f^{L}(\boldsymbol{x}_{j_1},\ldots,\boldsymbol{x}_{j_r})\cr
&\leq \frac{1}{m}\sum_{r=2}^{d}\sum_{j_1,\ldots,j_r}rf^{L}(\boldsymbol{x}_{j_1},\ldots,\boldsymbol{x}_{j_r})\cr
&\leq \frac{d}{m}\sum_{r=2}^{d}\sum_{j_1,\ldots,j_r}f^{L}(\boldsymbol{x}_{j_1},\ldots,\boldsymbol{x}_{j_r})=\frac{d}{m}f^{L}(\boldsymbol{x}),
\end{align}
where the first inequality holds because by feasibility of the solution $\boldsymbol{x}$, we have $\sum_{i}\max_{\ell\in [r]}\{x_{j_{\ell}i}\}\leq \sum_{i}\sum_{\ell=1}^{r}x_{j_{\ell}i}\leq r$, and the second inequality holds because the terms $f^{L}(\boldsymbol{x}_{j_1},\ldots,\boldsymbol{x}_{j_r})$ are nonnegative. Combining relations \eqref{eq:f-I-A} and \eqref{eq:max-2} completes the proof. 
\end{proof}


\begin{theorem}\label{thm:base-2}
Assume each externality function $f_{ij}$ is a polynomial with nonnegative coefficients of degree less than $d$. Then Algorithm \ref{alg-main} is a $d$-approximation algorithm for the MSW \eqref{eq:general-welfare-convex}.
\end{theorem} 
\begin{proof}
We use an induction on the number of agents to show that the expected value of the solution returned by Algorithm \ref{alg-main} is at least $\frac{1}{d}f^{L}(\boldsymbol{x})$, where $f^{L}(\boldsymbol{x})=\sum_{i=1}^{m}f_i^{L}(\boldsymbol{x}_i)$, and $\boldsymbol{x}$ is the \emph{optimal} solution to \eqref{eq:convex-Rel}. Without loss of generality, we may assume that the random set $A$ that is selected during the first round of Algorithm \ref{alg-main} is nonempty. Otherwise, no update occurs, and we can focus on the first iterate in which a nonempty set is selected.

The base case in which there is only $n=1$ agent follows trivially from Lemma \ref{lemm:increment}, because nonemptyness of $A$ implies $\bar{A}=\emptyset$, and thus $\mathbb{E}[{\rm Alg}]=\mathbb{E}[f_I(A)]\ge \frac{1}{d}f^{L}(\boldsymbol{x})$. Now assume that the induction hypothesis holds for any set of at most $n-1$ agents. Given an instance with $n$ agents, let $A=S_{I}^{\theta}$ be the nonempty random set of agents that are selected during the first round of Algorithm \ref{alg-main}. Moreover, let $S_1,\ldots,S_m$ be the (random) sets returned by the algorithm when applied on the \emph{remaining} agents in $\bar{A}$. As $|\bar{A}|\leq n-1$, using induction hypothesis on the agents $\bar{A}$, we have
\begin{align}\label{eq:induction}
\mathbb{E}[\sum_{i}f_i(S_i)|A]\ge \frac{1}{d}\max_{\boldsymbol{y}\boldsymbol{1}\leq \boldsymbol{1}, \boldsymbol{y}\ge 0} f^L(\boldsymbol{y})\ge \frac{1}{d} f^L(\boldsymbol{x}_{|_{\bar{A}}}),
\end{align}
where $\boldsymbol{y}\in \mathbb{R}^{|\bar{A}|\times m}_+$ is a variable restricted only to the agents in $\bar{A}$, and the second inequality holds because $\boldsymbol{x}_{|_{\bar{A}}}$ is a feasible solution to the middle maximization. (Recall that $\boldsymbol{x}_{|_{\bar{A}}}$ is the portion of solution $\boldsymbol{x}$ when restricted to rows $j\in \bar{A}$.) Now, we have
\begin{align}\nonumber
\mathbb{E}[{\rm Alg}]&=\mathbb{E}[\sum_{i\neq I}f_i(S_i)+f_I(S_I\cup A)]\cr 
&\ge \mathbb{E}[\sum_{i}f_i(S_i)+f_I(A)]\cr 
&= \mathbb{E}\big[\mathbb{E}[\sum_{i}f_i(S_i)+f_I(A)|A]\big]\cr 
&=\mathbb{E}\big[\mathbb{E}[\sum_{i}f_i(S_i)|A]+f_I(A)\big]\cr 
&\ge \frac{1}{d} \mathbb{E}[f^L(\boldsymbol{x}_{|_{\bar{A}}})]+\frac{1}{d}\mathbb{E}[f^L(\boldsymbol{x})-f^L(\boldsymbol{x}_{|_{\bar{A}}})]\cr 
&=\frac{1}{d} f^L(\boldsymbol{x}),
\end{align}
where the first inequality uses the superadditivity of $f_i$ due to supermodular property (that is $f_i(P\cup Q)\ge f_i(P)+f_i(Q)$ for any $P\cap Q=\emptyset$), and the last inequality holds by \eqref{eq:induction} and Lemma \ref{lemm:increment}. 
\end{proof}

\begin{corollary}
For the special case of positive linear externalities $f_{ij}(y)=y$, $\forall i,j$, one can take $d=2$, in which case Algorithm \ref{alg-main} is a $2$-approximation algorithm. Interestingly, derandomization of Algorithm \ref{alg-main} in this special case recovers the iterative greedy algorithm developed in \cite[Theorem 4]{de2012finding}, which first solves a linear program relaxation for MSW and then rounds the solution using an iterative greedy algorithm.
\end{corollary}

\begin{remark}
For convex polynomial externalities of degree at most $d$, the $d$-approximation guarantee of Theorem \ref{thm:base-2} is an exponential improvement over the $2^{O(d)}$-approximation guarantee given in \cite[Theorem 3.9]{bhalgat2012mechanisms}. 
\end{remark}

\subsection{Monotone Convex Externalities of Bounded Curvature}

In this part, we provide an approximation algorithm for the MSW with general monotone and positive convex externalities. Unfortunately, for general convex externalities, the Lov\'asz extension of the objective function does not admit a closed-form structure. For that reason, we develop an approximation algorithm whose performance guarantee depends on the curvature of the externality functions.

\begin{definition}\label{def:curvature}
Given $\alpha\in (0, 1)$, we define the $\alpha$-curvature of a nonnegative nondecreasing convex function $h:[0, b]\to \mathbb{R}_+$ as $\gamma_{\alpha}^h:=\inf_{y\in (0, b]}\frac{h(\alpha y)}{h(y)}$.
\end{definition}

\begin{remark}
Using monotonicity of $h$, we always have $\gamma_{\alpha}^{h}\in [0, 1]$. In particular, for any monotone $k$-homogeneous convex function $h(\alpha y)\ge \alpha^{k}h(y)$, we have $\gamma_{\alpha}^h\ge \alpha^{k}$.
\end{remark}

It is worth noting that \cite{conforti1984submodular} also develops a curvature-dependent greedy approximation algorithm for maximizing a nondecreasing \emph{submodular} function subject to a matroid constraint. However, the definition of curvature in \cite{conforti1984submodular} is different from ours as it looks at the maximum normalized growth rate of the overall objective function $f$ as a new element is added to the solution set. Moreover, here we are looking at \emph{supermodular} maximization (or submodular minimization) that behaves completely different in terms of approximability and solution method. In fact, for the case of submodular maximization, Theorem \ref{thm:multi-negative} already provides a curvature-independent $e$-approximation algorithm.   


Using Lemma \ref{lemm:convex-supper} the MSW \eqref{eq:general-welfare-convex} with monotone convex externality functions can be cast as the supermodular maximization problem \eqref{eq:SM}. Relaxing that problem via Lov\'asz extension, we obtain the concave program \eqref{eq:convex-Rel}, whose optimal solution, denoted by $\boldsymbol{x}$, can be found in polynomial time. We round the optimal fractional solution $\boldsymbol{x}$ to an integral one $\hat{\boldsymbol{x}}$ using the two-stage fair contention resolution scheme. It is instructive to think about the rounding process as a two-stage process for rounding the fractional $n\times m$ matrix $\boldsymbol{x}$. In the first stage, the columns are rounded independently, and in the second stage, the rows of the resulting solution are randomly revised to create the final integral solution $\hat{\boldsymbol{x}}$, satisfying the partition constraints in \eqref{eq:SM}. The rounding algorithm is summarized in Algorithm \ref{alg:contention}.

\begin{algorithm}[h]\caption{Rounding for Positive Convex Externalities}\label{alg:contention}
$\bullet$ {\bf Input:} The optimal $n\times m$ solution $\boldsymbol{x}$ to the concave program \eqref{eq:convex-Rel}.

\noindent 
$\bullet$ {\bf Stage 1:} For each $i\in [m]$ pick an independent uniform random variable $\theta_i\in [0, 1]$, and for $j\in[n]$ let $x^{\theta_i}_{ji}=1$ if $x_{ji}\ge \theta_i$, and $x^{\theta_i}_{ji}=0$, otherwise. Let $\boldsymbol{x}^{\boldsymbol{\theta}}=[\boldsymbol{x}_1^{\theta_1}|\ldots|\boldsymbol{x}_m^{\theta_m}]$ be the binary random matrix obtained at the end of stage 1.

\noindent
$\bullet$ {\bf Stage 2:}  For each $j\in [n]$, let $A_j:=\{i: x^{\theta_i}_{ji}=1\}$ be a random set denoting the positions in the $j$-th row of $\boldsymbol{x}^{\boldsymbol{\theta}}$ that are rounded to $1$ in the first stage. If $|A_j|\leq 1$, do nothing. Otherwise, set all the entries of the $j$th row of $\boldsymbol{x}^{\boldsymbol{\theta}}$ to $0$ except the $i$th entry, where $i$ is selected from $A_j$ with probability $r_{iA_j}=\sum_{i'\in A_j\setminus \{i\}}\frac{x_{ji'}}{|A_j|-1}+\sum_{i'\notin A_j}\frac{x_{ji'}}{|A_j|}$.

\noindent
$\bullet$ {\bf Output:} $\hat{\boldsymbol{x}}$. 
\end{algorithm}

\begin{theorem}
Algorithm \ref{alg:contention} is a randomized $\gamma_{\frac{1}{4}}^{-1}$-approximation algorithm for the MSW \eqref{eq:general-welfare-convex} with positive convex externality functions $f_{ij}$, where $\gamma_{\frac{1}{4}}:=\min_{i,j}\gamma_{\frac{1}{4}}^{f_{ij}}$.   
\end{theorem}
\begin{proof}
The first stage in Algorithm \ref{alg:contention} simply follows from the definition of the Lov\'asz extension such that $\sum_i\mathbb{E}[f_i(\boldsymbol{x}^{\theta_i}_i)]=\sum_if^{L}_i(\boldsymbol{x}_i)$. Thus, after the first stage we obtain a binary random matrix $\boldsymbol{x}^{\boldsymbol{\theta}}=[\boldsymbol{x}_1^{\theta_1}|\ldots|\boldsymbol{x}_m^{\theta_m}]$ whose expected objective value equals to the optimal value of the concave relaxation \eqref{eq:convex-Rel}. Unfortunately, after the first phase, the rounded solution may not satisfy the partition constraints as multiple items may be assigned to the same agent $j$. The second stage resolves that issue by modifying $\boldsymbol{x}^{\boldsymbol{\theta}}$ to $\hat{\boldsymbol{x}}$ by separately applying the fair contention resolution to every row of $\boldsymbol{x}^{\boldsymbol{\theta}}$. Thus, after the second stage of rounding, $\hat{\boldsymbol{x}}$ is a feasible solution to \eqref{eq:SM}.  

Since $\theta_i, i\in [m]$ are independent uniform random variables, for any $j\in [n]$, $\mathbb{P}\{x^{\theta_i}_{ji}=1\}=\mathbb{P}\{\theta_i\leq x_{ji}\}=x_{ji}$. Therefore, for any row $j$, one can imagine that the $m$ entries of row $j$ compete independently with probabilities $\{x_{ji}\}_{i\in [m]}$ to receive the resource in the contention resolution scheme. Let us consider an arbitrary column $i$. Using Lemma \ref{lemm:contention} for row $j$, and given $i\in A_j$, the probability that item $j$ is given to player $i$ is at least  $1-\frac{1}{e}$, that is $\mathbb{P}\{\hat{x}_{ji}=1|x^{\theta_i}_{ji}=1\}\ge 1-\frac{1}{e}, \forall j\in [n]$. Now consider any $k\neq j$ and note that $\mathbb{P}\{\hat{x}_{ji}=1|x^{\theta_i}_{ji}=1, x^{\theta_i}_{ki}=1\}=\mathbb{P}\{\hat{x}_{ji}=1|x^{\theta_i}_{ji}=1\}$, as the event $\{\hat{x}_{ji}=1|x^{\theta_i}_{ji}=1\}$ is independent of the event $\{x^{\theta_i}_{ki}=1\}$. Using union bound, we can write
\begin{align}\nonumber
&\mathbb{P}\{\hat{x}_{ji}=1, \hat{x}_{ki}=1|x^{\theta_i}_{ji}=1, x^{\theta_i}_{ki}=1\}\cr 
&\qquad=1-\mathbb{P}\{\hat{x}_{ji}=0 \cup \hat{x}_{ki}=0|x^{\theta_i}_{ji}=1, x^{\theta_i}_{ki}=1\}\cr 
&\qquad\ge 1-\mathbb{P}\{\hat{x}_{ji}=0 |x^{\theta_i}_{ji}=1, x^{\theta_i}_{ki}=1\}\cr 
&\qquad\qquad-\mathbb{P}\{\hat{x}_{ki}=0|x^{\theta_i}_{ji}=1, x^{\theta_i}_{ki}=1\}\cr 
&\qquad=\mathbb{P}\{\hat{x}_{ji}=1 |x^{\theta_i}_{ji}=1, x^{\theta_i}_{ki}=1\}\cr 
&\qquad\qquad+\mathbb{P}\{\hat{x}_{ki}=1|x^{\theta_i}_{ji}=1, x^{\theta_i}_{ki}=1\}-1\cr 
&\qquad=\mathbb{P}\{\hat{x}_{ji}=1 |x^{\theta_i}_{ji}=1\}\!+\!\mathbb{P}\{\hat{x}_{ki}=1|x^{\theta_i}_{ki}=1\}-1> \frac{1}{4}.
\end{align}
Using Jensen's inequality, we can lower-bound the expected objective value of $\hat{\boldsymbol{x}}$ as
\begin{align}\label{eq:jensen}
&\sum_{i,j}\mathbb{E}\big[f_{ij}\big(\sum_{k}a^i_{jk}\hat{x}_{ki}\hat{x}_{ji}\big)\big]\cr 
&\qquad=\sum_{i,j}\mathbb{E}_{\theta_i}\Big[\mathbb{E}\big[f_{ij}\big(\sum_{k}a^i_{jk}\hat{x}_{ki}\hat{x}_{ji}\big)|\boldsymbol{x}_i^{\theta_i}\big]\Big]\cr
&\qquad\ge \sum_{i,j}\mathbb{E}_{\theta_i}\Big[f_{ij}\big(\mathbb{E}\big[\sum_{k}a^i_{jk}\hat{x}_{ki}\hat{x}_{ji}|\boldsymbol{x}_i^{\theta_i}\big]\big)\Big]\cr
&\qquad= \sum_{i,j}\mathbb{E}_{\theta_i}\Big[f_{ij}\big(\sum_{k}a^i_{jk}\mathbb{E}[\hat{x}_{ki}\hat{x}_{ji}|\boldsymbol{x}_i^{\theta_i}]\big)\Big],
\end{align}
where the inner expectation in the first equality is with respect to $\boldsymbol{\theta}_{-i}=(\theta_{i'}, i'\neq i)$ and the randomness introduced by the contention resolution in the second phase. Let $\boldsymbol{1}_{\{\cdot\}}$ denote the indicator function. Then, for any $i,j,k$, we have
\begin{align}\label{eq:final-convex-r}
&\mathbb{E}[\hat{x}_{ki}\hat{x}_{ji}|\boldsymbol{x}_i^{\theta_i}]=\mathbb{E}[\hat{x}_{ki}\hat{x}_{ji}|x_{ji}^{\theta_i},x_{ki}^{\theta_i}]\cr 
&=\mathbb{E}[\hat{x}_{ki}\hat{x}_{ji}|x_{ji}^{\theta_i}=1,x_{ki}^{\theta_i}=1]\cdot \boldsymbol{1}_{\{x_{ij}^{\theta_i}=1,x_{ik}^{\theta_i}=1\}}\cr 
&=\mathbb{P}\{\hat{x}_{ki}=1, \hat{x}_{ji}=1 |x_{ji}^{\theta_i}=1,x_{ki}^{\theta_i}=1\}\cdot \boldsymbol{1}_{\{x_{ji}^{\theta_i}=1,x_{ki}^{\theta_i}=1\}}\cr 
&\ge \frac{1}{4}\cdot \boldsymbol{1}_{\{x_{ji}^{\theta_i}=1,x_{ki}^{\theta_i}=1\}}=\frac{1}{4}x_{ji}^{\theta_i}x_{ki}^{\theta_i}.
\end{align}  
Substituting the above relation into \eqref{eq:jensen}, and using the monotonicity of $f_{ij}$ together with Definition \ref{def:curvature}, we can write
\begin{align}\nonumber
\mathbb{E}\Big[\sum_{i,j}f_{ij}&\big(\sum_{k}a^i_{jk}\hat{x}_{ki}\hat{x}_{ji}\big)\Big]\!\ge\! \sum_{i,j}\mathbb{E}_{\theta_i}\Big[f_{ij}\big(\sum_{k}\frac{a^i_{jk}}{4}x_{ji}^{\theta_i}x_{ki}^{\theta_i}\big)\Big]\cr 
&\ge \gamma_{\frac{1}{4}} \sum_{i}\mathbb{E}_{\theta_i}\Big[\sum_j f_{ij}\big(\sum_{k}a^i_{jk}x_{ji}^{\theta_i}x_{ki}^{\theta_i}\big)\Big]\cr 
&=\gamma_{\frac{1}{4}} \sum_{i}\mathbb{E}_{\theta_i}\Big[f_i(\boldsymbol{x}^{\theta_i}_i)\Big]=\gamma_{\frac{1}{4}} \sum_{i}f_i^L(\boldsymbol{x}_i).
\end{align}
Therefore, the expected value of the rounded solution is at least $\gamma_{\frac{1}{4}}$ times the optimal Lov\'asz relaxation, which completes the proof. 
\end{proof}

\subsection{A Numerical Example}

Here, we provide a numerical experiment to verify the performance guarantee of the algorithms developed in this section. In our numerical experiment, we fix the number of items to $m=10$, and the externality functions to be linear $f_{ij}(y)=y, \forall i,j$. As a result, the objective function for MSW can be written as $f(\boldsymbol{x})=\sum_{i=1}^{10}\boldsymbol{x}_i'\boldsymbol{A}_i\boldsymbol{x_i}$. 

We generate $40$ different instances as the number of agents increase from $n=10$ to $n=50$. Given an instance with $n$ agents, we generate the weight matrices $\boldsymbol{A}_i\in \{0,1\}^{n\times n}$ by randomly selecting $10$ rows in $\boldsymbol{A}_i$ and uniformly setting one of the elements in that row to $1$, and the remaining elements of that row to $0$. The expected objective value of Algorithm \ref{alg-main}, Algorithm \ref{alg:contention}, and the optimal IP \eqref{eq:SM} are illustrated in Figure \ref{fig:flow}, where the $x$-axis corresponds to different instances of $n=10,\ldots,50$, and the $y$-axis shows the expected objective value. While in this specific example Algorithm \ref{alg-main} mostly outperforms Algorithm \ref{alg:contention}, however, as can be seen, the expected objective value of both algorithms is close to the optimal IP objective value. In particular, for all the instances, Algorithm \ref{alg-main} achieves at least $\frac{1}{2}$ of the optimal objective value. 

\begin{figure}[t]
\vspace{-2.7cm}
\begin{center}
\includegraphics[totalheight=.3\textheight,
width=.45\textwidth,viewport=-20 0 330 350]{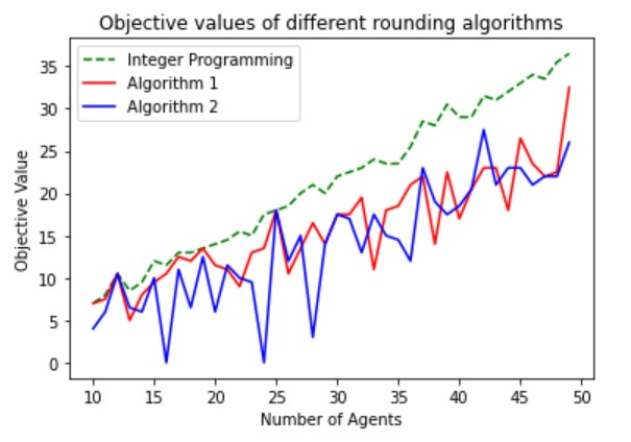}
\end{center}
\vspace{-0.3cm}
\caption{Illustration of the performance of Algorithms \ref{alg-main} and \ref{alg:contention} for positive linear externalities.}\label{fig:flow}\vspace{-0.3cm}
\end{figure}

\section{MSW with Positive Monotone Concave Externalities}\label{sec:concave}
In this section, we extend our results to approximate MSW with nondecreasing positive concave externality functions. Unfortunately, for positive concave externalities, Lemmas \ref{lemm:convex-supper} and \ref{lemm:convex-sub} do not hold, and the objective function in \eqref{eq:SM} is no longer supermodular or submodular. For that reason, we cannot directly use the Lov\'asz or multilinear extensions to solve or approximate the continuous relaxation of the MSW. To address this issue, in this section, we take two different approaches based on a combination of the ideas that have been developed so far. Each method is suitable for a particular subclass of concave functions and together provides a good understanding of how to solve MSW with positive concave externalities.

\subsection{Positive Concave Externalities of Small Curvature}

For simplicity and without any loss of generality, throughout this section we assume that the influence weights are normalized such that $\sum_{k=1}^na^i_{jk}=1, \forall i,j$. Otherwise, we can scale and redefine the externality functions as $f_{ij}(y)\leftarrow f_{ij}((\sum_{k=1}^na^i_{jk})y)$. Note that such a scaling preserves concavity and monotonicity, and for the new externalities we have $f_{ij}:[0,1]\to \mathbb{R}_+$. The following proposition provides a performance guarantee for approximating MSW with positive concave externalities, which is particularly effective for concave externalities of small curvature.

\begin{proposition}
For nondecreasing concave externalities $f_{ij}:[0,1]\to \mathbb{R}_+$, let $\beta:=\max_{ij}\sup_{X}\frac{f_{ij}(\mathbb{E}[X])}{\mathbb{E}[f_{ij}(X)]}$, where the $\sup$ is over all random variables $X\in [0, 1]$. Then, the MSW with positive concave externalities admits a $4\beta$-approximation algorithm.
\end{proposition}
\begin{proof}
Let us consider the IP \eqref{eq:IP-convex} for the MSW with positive concave externality functions $f_{ij}$, and note that $x_{ji}x_{ki}=\min\{x_{ji},x_{ki}\}$ for any two binary variables $x_{ji},x_{ki}\in \{0,1\}$. By replacing this relation into the objective function of IP \eqref{eq:IP-convex} and relaxing the binary constraints, we obtain the following concave relaxation for MSW:
\begin{align}\label{eq:concave-relaxation-exter}
&\max \sum_{i,j}f_{ij}\Big(\sum_{k=1}^na^i_{jk}\min\{x_{ji},x_{ki}\}\Big)\cr 
&\qquad \sum_{i=1}^m x_{ji}\leq 1\ \forall j,\cr
&\qquad  x_{ji}\ge 0\  \forall i,j,
\end{align}
where the concavity of the objective function follows from the concavity of $f_{ij}$ and the concavity of $\sum_{k=1}^na^i_{jk}\min\{x_{ji},x_{ki}\}$. Therefore, one can solve \eqref{eq:concave-relaxation-exter} in polynomial time to obtain an optimal fractional solution $\boldsymbol{x}$. Using this solution as an input to Algorithm \ref{alg:contention} we obtain a feasible integral solution $\hat{\boldsymbol{x}}$, whose expected objective value can be lower-bounded as
\begin{align}\nonumber
&\sum_{i,j}\mathbb{E}\big[f_{ij}\big(\sum_{k}a^i_{jk}\hat{x}_{ki}\hat{x}_{ji}\big)\big] 
\!\ge\! \frac{1}{\beta}\sum_{i,j}f_{ij}\big(\sum_{k}a^i_{jk}\mathbb{E}[\hat{x}_{ki}\hat{x}_{ji}]\big)\cr
&\qquad\qquad= \frac{1}{\beta}\sum_{i,j}f_{ij}\big(\sum_{k}a^i_{jk}\mathbb{E}_{\theta_i}\big[\mathbb{E}[\hat{x}_{ki}\hat{x}_{ji}|\boldsymbol{x}_i^{\theta_i}]\big]\big)\cr
&\qquad\qquad\ge \frac{1}{\beta}\sum_{i,j}f_{ij}\big(\sum_{k}\frac{a^i_{jk}}{4}\mathbb{E}_{\theta_i}[x_{ji}^{\theta_i}x_{ki}^{\theta_i}]\big)\cr 
&\qquad\qquad= \frac{1}{\beta} \sum_{i,j} f_{ij}\big(\sum_{k}\frac{a^i_{jk}}{4}\min\{x_{ji},x_{ki}\}\big)\cr 
&\qquad\qquad\ge \frac{1}{4\beta}\sum_{i,j} f_{ij}\big(\sum_{k}a^i_{jk}\min\{x_{ji},x_{ki}\}\big),
\end{align}
where the first inequality uses the definition of $\beta$, the second inequality uses \eqref{eq:final-convex-r}, and the last inequality follows from concavity of $f_{ij}$ and the fact that $f_{ij}(0)=0$.
\end{proof}

\subsection{Multilinear Extension for Positive Concave Externalities}
This final section provides an alternative approach based on the multilinear extension to solve the MSW subject to positive concave externalities approximately. Let us again consider the IP formulation \eqref{eq:IP-convex}. By defining new binary variables $y^i_{jk}=x_{ji}x_{ki}$, we can rewrite IP \eqref{eq:IP-convex} as
\begin{align}\label{eq:sub-y}
&\max f(\boldsymbol{y}):=\sum_{i,j}f_{ij}\Big(\sum_{k=1}^n a^i_{jk}y^i_{jk}\Big)\cr 
&\qquad \sum_{i=1}^m x_{ji}\leq 1\ \forall j,\cr
&\qquad y^i_{jk}=x_{ji}x_{ki}\ \forall i,j,k,\cr
&\qquad x_{ji}, y^i_{jk}\in\{0,1\}\ \forall i,j,k.
\end{align}
When twice differentiable, a function is submodular if and only if all cross-second-derivatives are non-positive \cite{bach2019submodular}. For fixed $i,j$, and any $\ell_1\neq \ell_2$, we have 
\begin{align}\nonumber
\frac{\partial f_{ij}(\sum_{k=1}^n a^i_{jk}y^i_{jk})}{\partial y^i_{j\ell_1}\partial y^i_{j\ell_2}}=a^i_{j\ell_1}a^i_{j\ell_2}f''_{ij}(\sum_{k=1}^n a^i_{jk}y^i_{jk})\leq 0,
\end{align}
where the second inequality follows by concavity of $f_{ij}$ and because $a^i_{jk}\ge 0\ \forall i,j,k$. Therefore, the objective function in \eqref{eq:sub-y} is a monotone and nonnegative submodular function over the ground set of triples $(i,j,k)\in [m]\times [n]^2$. Unfortunately, the set of constraints in \eqref{eq:sub-y} do not define a matroid. Therefore, to induce a matroid structure on the set of constraints, we relax the second set of constraints in \eqref{eq:sub-y} and replace $\sum_{i=1}^m x_{ji}\leq 1\ \forall j$ by the valid inequalities $\sum_{i=1}^m y^i_{jk}\leq 1\ \forall j,k.$ That gives us the following multilinear relaxation for the MSW:
\begin{align}\label{eq:sub-y-relaxed}
\max \big\{f^M(\boldsymbol{y}): \sum_{i=1}^m y^i_{jk}\leq 1\ \forall j,k,\ \boldsymbol{y}\ge \boldsymbol{0} \big\}.
\end{align} 
Clearly, the set of constraints in \eqref{eq:sub-y-relaxed} defines a partition matroid over $[m]\times [n]^2$, which is a down-monotone solvable polytope. Therefore, we can use Lemma \ref{lemm:submodular-multilinear} to solve the multilinear relaxation \eqref{eq:sub-y-relaxed} within a factor $1-\frac{1}{e}$ and round that fractional solution to a binary vector $\boldsymbol{y}$ (by abuse of notation) using the pipage rounding \cite[Lemma B.3]{vondrak2013symmetry}. Unfortunately, in general, there is no guarantee on whether the rounded solution $\boldsymbol{y}$ can be decomposed into the product of binary variables $x_{ji}x_{ki}$. However, if such a decomposition is possible approximately, we can ensure that $\boldsymbol{y}$ also satisfies the second set of constraints in \eqref{eq:sub-y} approximately. This suggests a greedy algorithm based on the multilinear extension that is summarized in Algorithm \ref{alg:greedy}. 

\begin{algorithm}[H]\caption{Greedy Algorithm for Positive Concave Externalities}\label{alg:greedy}
$\bullet$ Solve the multilinear relaxation \eqref{eq:sub-y-relaxed} using the continuous greedy algorithm (Lemma \ref{lemm:submodular-multilinear}) and round its solution to $\boldsymbol{y}\in \{0,1\}^{mn^2}$ using the pipage rounding \cite{vondrak2013symmetry}.   

\noindent 
$\bullet$ Let $Y\subset [m]\times [n]^2$ be the subset of elements whose characteristic vector is given by the binary vector $\boldsymbol{y}$. Sort the elements in $Y$ according to their marginal contributions to $f(\boldsymbol{y})$, i.e., given $T=\{(i_{\ell},j_{\ell},k_{\ell}), \ell=1,\ldots,r-1\}$, let $(i_{r},j_{r},k_{r})$ be the element in $Y\setminus T$ that maximizes $f(T\cup \{(i_{r},j_{r},k_{r})\})$.

\noindent
$\bullet$ Set $\hat{\boldsymbol{x}}=\boldsymbol{0}$, $\hat{\boldsymbol{y}}=\boldsymbol{0}$, and $S=\emptyset$. Process the elements in $Y$ according to their order as follows: if an element $(i_{r},j_{r},k_{r})$ can be inserted into the solution set $S$ without violating the constraints $\hat{y}^i_{jk}=\hat{x}_{ji}\hat{x}_{ki}\ \forall i,j,k$, then set $S\leftarrow S\cup \{(i_{r},j_{r},k_{r})\}$ and $\hat{y}^{i_r}_{j_rk_r}=1, \hat{x}_{j_ri_r}=1, \hat{x}_{k_ri_r}=1$. Otherwise, skip to the next element $(i_{r+1},j_{r+1},k_{r+1})$.

\noindent
$\bullet$ Return $(\hat{\boldsymbol{y}},\hat{\boldsymbol{x}})$. 
\end{algorithm}

\begin{proposition}
Assume that the number of elements that are skipped during the last round of Algorithm \ref{alg:greedy} is bounded above by $c$. Then, Algorithm \ref{alg:greedy} is an $\frac{e(c+1)}{e-1}$-approximation algorithm for the MSW with positive concave externalities.
\end{proposition}
\emph{Proof:} Let $\hat{\boldsymbol{y}}$ be the characteristic vector of the final solution set $S$ that is generated by Algorithm \ref{alg:greedy}. Clearly, $(\hat{\boldsymbol{y}}, \hat{\boldsymbol{x}})$ is consistent with all the constraints $\hat{y}^i_{jk}=\hat{x}_{ji}\hat{x}_{ki}\ \forall i,j,k$, which means that $(\hat{\boldsymbol{y}}, \hat{\boldsymbol{x}})$ returned by Algorithm \ref{alg:greedy} is a feasible integral solution to the IP \eqref{eq:sub-y}. Moreover, we note that the algorithm always chooses the first element in the sequence because it does not violate any constraint, i.e., $(i_1,j_1,k_1)\in S$. Since the objective function $f(\cdot)$ is a nonnegative nondecreasing submodular function and the elements in $Y$ are processed according to nonincreasing marginal contributions, we have $f(S)\ge f(\{(i_1,j_1,k_1)\})\ge \frac{f(Y\setminus S)}{|Y|-|S|}$. Using this relation and submodularity of $f$, we have
\begin{align}\nonumber
f(Y)&=f(S\cup (Y\setminus S))+f(S\cap (Y\setminus S))\cr 
&\leq f(S)+f(Y\setminus S)\cr 
&\leq (|Y|-|S|+1)f(S).
\end{align}
Now we can write
\begin{align}\nonumber
f(\hat{\boldsymbol{y}})&=f(S)\ge \frac{1}{|Y|-|S|+1}f(Y)\cr 
&\ge \frac{1}{c+1}f(Y)=\frac{1}{c+1}f(\boldsymbol{y})\cr 
&=\frac{1}{c+1}f^M(\boldsymbol{y})\ge \frac{1}{c+1}f^M(\boldsymbol{y}^*)\cr 
& \ge  \frac{1}{c+1}(1-\frac{1}{e})f(OPT),
\end{align}
where $\boldsymbol{y}^*$ is the solution obtained from \eqref{eq:sub-y-relaxed} using the continuous greedy algorithm, and $OPT$ is the optimal integral solution to \eqref{eq:sub-y-relaxed}. Here, the third equality holds by integrality of $\boldsymbol{y}$, and the third inequality holds by the property of the pipage rounding that rounds a fractional solution $\boldsymbol{y}^*$ to an integral one $\boldsymbol{y}$ without decreasing the multilinear objective value $f^M$. Finally, the last inequality uses Lemma \ref{lemm:submodular-multilinear}.\hfill{$\blacksquare$}

\begin{remark}
In fact, one can bound the number of skipped elements $c$ in Algorithm \ref{alg:greedy}. As a naive upper bound, we note that selecting each new element into $S$ can eliminate the possibility of choosing at most $2n$ other elements into $S$. Since $Y$ has at most $n^2m$ elements, this gives an upper bound of $c\leq \frac{nm}{2}$. However, in practice, we observed that the value of $c$ in Algorithm \ref{alg:greedy} is much smaller than this naive upper bound. Although this bound depends polynomially on $n$ and $m$, since we are working with general positive concave externalities (submodular) functions, we believe that in the worst-case scenario, any approximation algorithm should have a polynomial or logarithmic dependence on these parameters. Nevertheless, improving this dependence on the parameters $n$ and $m$ is an interesting future research direction.    
\end{remark}

\section{Conclusions}\label{sec:conclusion}
We studied the maximum social welfare problem with multiple items subject to network externalities. We first showed that the problem could be cast as a multi-agent submodular or supermodular optimization. We then used convex programming and various randomized rounding techniques to devise improved approximation algorithms for that problem. In particular, we provided a unifying method to devise approximation algorithms for the multi-item allocation problem using the rich literature from submodular optimization. Our principled approach not only recovers or improves some of the existing algorithms that have been derived in the past in an ad hoc fashion, but it also has the potential to be used for devising efficient algorithms with additional complicating constraints.

This work opens several avenues for future research. It is interesting to extend our results by incorporating extra constraints into the MSW problem. For instance, in cyber-physical network security, resources tend to be limited and only a constrained subset of agents could have access to security resources. Moreover, it would be interesting to see if the approximation factors developed in this work can be improved or whether matching hardness lower bounds can be established. Finally, one can study a dynamic version of the MSW where the influence weights or the externality functions may change over time.

\bibliographystyle{IEEEtran}
\bibliography{thesisrefs}

\begin{thebibliography}{10}
\providecommand{\url}[1]{#1}
\csname url@samestyle\endcsname
\providecommand{\newblock}{\relax}
\providecommand{\bibinfo}[2]{#2}
\providecommand{\BIBentrySTDinterwordspacing}{\spaceskip=0pt\relax}
\providecommand{\BIBentryALTinterwordstretchfactor}{4}
\providecommand{\BIBentryALTinterwordspacing}{\spaceskip=\fontdimen2\font plus
\BIBentryALTinterwordstretchfactor\fontdimen3\font minus
  \fontdimen4\font\relax}
\providecommand{\BIBforeignlanguage}[2]{{%
\expandafter\ifx\csname l@#1\endcsname\relax
\typeout{** WARNING: IEEEtran.bst: No hyphenation pattern has been}%
\typeout{** loaded for the language `#1'. Using the pattern for}%
\typeout{** the default language instead.}%
\else
\language=\csname l@#1\endcsname
\fi
#2}}
\providecommand{\BIBdecl}{\relax}
\BIBdecl

\bibitem{cao2015pricing}
Z.~Cao, X.~Chen, X.~Hu, and C.~Wang, ``Pricing in social networks with negative
  externalities,'' in \emph{International Conference on Computational Social
  Networks}.\hskip 1em plus 0.5em minus 0.4em\relax Springer, 2015, pp. 14--25.

\bibitem{haghpanah2013optimal}
N.~Haghpanah, N.~Immorlica, V.~Mirrokni, and K.~Munagala, ``Optimal auctions
  with positive network externalities,'' \emph{ACM Transactions on Economics
  and Computation (TEAC)}, vol.~1, no.~2, pp. 1--24, 2013.

\bibitem{grossklags2008secure}
J.~Grossklags, N.~Christin, and J.~Chuang, ``Secure or insure? {A}
  game-theoretic analysis of information security games,'' in \emph{Proc. 17th
  International Conference on World Wide Web}, 2008, pp. 209--218.

\bibitem{roughgarden2005selfish}
T.~Roughgarden, \emph{Selfish {R}outing and the {P}rice of {A}narchy}.\hskip
  1em plus 0.5em minus 0.4em\relax MIT Press, 2005.

\bibitem{etesami2016pure}
S.~R. Etesami and T.~Ba{\c{s}}ar, ``Pure {N}ash equilibrium in a capacitated
  selfish resource allocation game,'' \emph{IEEE Transactions on Control of
  Network Systems}, vol.~5, no.~1, pp. 536--547, 2016.

\bibitem{baev2008approximation}
I.~Baev, R.~Rajaraman, and C.~Swamy, ``Approximation algorithms for data
  placement problems,'' \emph{SIAM Journal on Computing}, vol.~38, no.~4, pp.
  1411--1429, 2008.

\bibitem{etesami2020complexity}
S.~R. Etesami, ``Complexity and approximability of optimal resource allocation
  and {N}ash equilibrium over networks,'' \emph{SIAM Journal on Optimization},
  vol.~30, no.~1, pp. 885--914, 2020.

\bibitem{etesami2020smart}
S.~R. Etesami, W.~Saad, N.~B. Mandayam, and H.~V. Poor, ``Smart routing of
  electric vehicles for load balancing in smart grids,'' \emph{Automatica},
  vol. 120, p. 109148, 2020.

\bibitem{blumrosen2007welfare}
L.~Blumrosen and S.~Dobzinski, ``Welfare maximization in congestion games,''
  \emph{IEEE Journal on Selected Areas in Communications}, vol.~25, no.~6, pp.
  1224--1236, 2007.

\bibitem{rosenthal1973class}
R.~W. Rosenthal, ``A class of games possessing pure-strategy {N}ash
  equilibria,'' \emph{International Journal of Game Theory}, vol.~2, no.~1, pp.
  65--67, 1973.

\bibitem{milchtaich1996congestion}
I.~Milchtaich, ``Congestion games with player-specific payoff functions,''
  \emph{Games and Economic Behavior}, vol.~13, no.~1, pp. 111--124, 1996.

\bibitem{candogan2012optimal}
O.~Candogan, K.~Bimpikis, and A.~Ozdaglar, ``Optimal pricing in networks with
  externalities,'' \emph{Operations Research}, vol.~60, no.~4, pp. 883--905,
  2012.

\bibitem{akhlaghpour2010optimal}
H.~Akhlaghpour, M.~Ghodsi, N.~Haghpanah, V.~S. Mirrokni, H.~Mahini, and
  A.~Nikzad, ``Optimal iterative pricing over social networks,'' in
  \emph{International Workshop on Internet and Network Economics}.\hskip 1em
  plus 0.5em minus 0.4em\relax Springer, 2010, pp. 415--423.

\bibitem{bhattacharya2011allocations}
S.~Bhattacharya, J.~Kulkarni, K.~Munagala, and X.~Xu, ``On allocations with
  negative externalities,'' in \emph{International Workshop on Internet and
  Network Economics}.\hskip 1em plus 0.5em minus 0.4em\relax Springer, Berlin,
  Heidelberg, 2011, pp. 25--36.

\bibitem{de2012finding}
B.~De~Keijzer and G.~Sch{\"a}fer, ``Finding social optima in congestion games
  with positive externalities,'' in \emph{European Symposium on
  Algorithms}.\hskip 1em plus 0.5em minus 0.4em\relax Springer, 2012, pp.
  395--406.

\bibitem{bhalgat2012mechanisms}
A.~Bhalgat, S.~Gollapudi, and K.~Munagala, ``Mechanisms and allocations with
  positive network externalities,'' in \emph{Proceedings of the 13th ACM
  Conference on Electronic Commerce}, 2012, pp. 179--196.

\bibitem{etesami2017price}
S.~R. Etesami and T.~Ba{\c{s}}ar, ``Price of anarchy and an approximation
  algorithm for the binary-preference capacitated selfish replication game,''
  \emph{Automatica}, vol.~76, pp. 153--163, 2017.

\bibitem{feldman2013pricing}
M.~Feldman, D.~Kempe, B.~Lucier, and R.~Paes~Leme, ``Pricing public goods for
  private sale,'' in \emph{Proceedings of the Fourteenth ACM Conference on
  Electronic Commerce}, 2013, pp. 417--434.

\bibitem{chakrabarty2005fairness}
D.~Chakrabarty, A.~Mehta, and V.~Nagarajan, ``Fairness and optimality in
  congestion games,'' in \emph{Proceedings of the 6th ACM Conference on
  Electronic Commerce}, 2005, pp. 52--57.

\bibitem{dekeijzer2014externalities}
B.~De~Keijzer, ``Externalities and cooperation in algorithmic game theory,''
  \emph{Ph.D. Thesis, Vrije Universiteit Amsterdam}, 2014.

\bibitem{lehmann2006combinatorial}
B.~Lehmann, D.~Lehmann, and N.~Nisan, ``Combinatorial auctions with decreasing
  marginal utilities,'' \emph{Games and Economic Behavior}, vol.~55, no.~2, pp.
  270--296, 2006.

\bibitem{blumrosen2007combinatorial}
L.~Blumrosen and N.~Nisan, ``Combinatorial auctions,'' \emph{Algorithmic Game
  Theory}, vol. 267, p. 300, 2007.

\bibitem{chekuri2014submodular}
C.~Chekuri, J.~Vondr{\'a}k, and R.~Zenklusen, ``Submodular function
  maximization via the multilinear relaxation and contention resolution
  schemes,'' \emph{SIAM Journal on Computing}, vol.~43, no.~6, pp. 1831--1879,
  2014.

\bibitem{calinescu2007maximizing}
G.~Calinescu, C.~Chekuri, M.~P{\'a}l, and J.~Vondr{\'a}k, ``Maximizing a
  submodular set function subject to a matroid constraint,'' in
  \emph{International Conference on Integer Programming and Combinatorial
  Optimization}.\hskip 1em plus 0.5em minus 0.4em\relax Springer, 2007, pp.
  182--196.

\bibitem{chekuri2011submodular}
C.~Chekuri and A.~Ene, ``Submodular cost allocation problem and applications,''
  in \emph{International Colloquium on Automata, Languages, and
  Programming}.\hskip 1em plus 0.5em minus 0.4em\relax Springer, 2011, pp.
  354--366.

\bibitem{chekuri2011approximation}
------, ``Approximation algorithms for submodular multiway partition,'' in
  \emph{2011 IEEE 52nd Annual Symposium on Foundations of Computer
  Science}.\hskip 1em plus 0.5em minus 0.4em\relax IEEE, 2011, pp. 807--816.

\bibitem{kleinberg2002approximation}
J.~Kleinberg and E.~Tardos, ``Approximation algorithms for classification
  problems with pairwise relationships: {M}etric labeling and {M}arkov random
  fields,'' \emph{Journal of the ACM (JACM)}, vol.~49, no.~5, pp. 616--639,
  2002.

\bibitem{ene2014hardness}
A.~Ene and J.~Vondr{\'a}k, ``Hardness of submodular cost allocation: {L}attice
  matching and a simplex coloring conjecture,'' in \emph{Approximation,
  Randomization, and Combinatorial Optimization}.\hskip 1em plus 0.5em minus
  0.4em\relax Schloss Dagstuhl-Leibniz-Zentrum fuer Informatik, 2014.

\bibitem{santiago2018multi}
R.~Santiago and F.~B. Shepherd, ``Multi-agent submodular optimization,''
  \emph{arXiv preprint arXiv:1803.03767}, 2018.

\bibitem{santiago2019multivariate}
------, ``Multivariate submodular optimization,'' in \emph{International
  Conference on Machine Learning}.\hskip 1em plus 0.5em minus 0.4em\relax PMLR,
  2019, pp. 5599--5609.

\bibitem{Lovasz1983submodular}
L.~Lov{\'a}sz, ``Submodular functions and convexity,'' in \emph{Mathematical
  Programming: The State of the Art}.\hskip 1em plus 0.5em minus 0.4em\relax
  Springer, 1983, pp. 235--257.

\bibitem{feldman2011unified}
M.~Feldman, J.~Naor, and R.~Schwartz, ``A unified continuous greedy algorithm
  for submodular maximization,'' in \emph{2011 IEEE 52nd Annual Symposium on
  Foundations of Computer Science}.\hskip 1em plus 0.5em minus 0.4em\relax
  IEEE, 2011, pp. 570--579.

\bibitem{feige2010submodular}
U.~Feige and J.~Vondr{\'a}k, ``The submodular welfare problem with demand
  queries,'' \emph{Theory of Computing}, vol.~6, no.~1, pp. 247--290, 2010.

\bibitem{vondrak2013symmetry}
J.~Vondr{\'a}k, ``Symmetry and approximability of submodular maximization
  problems,'' \emph{SIAM Journal on Computing}, vol.~42, no.~1, pp. 265--304,
  2013.

\bibitem{conforti1984submodular}
M.~Conforti and G.~Cornu{\'e}jols, ``Submodular set functions, matroids and the
  greedy algorithm: {T}ight worst-case bounds and some generalizations of the
  {R}ado-{E}dmonds theorem,'' \emph{Discrete Applied Mathematics}, vol.~7,
  no.~3, pp. 251--274, 1984.

\bibitem{bach2019submodular}
F.~Bach, ``Submodular functions: from discrete to continuous domains,''
  \emph{Mathematical Programming}, vol. 175, no.~1, pp. 419--459, 2019.

\end{thebibliography}
\end{document}